\newtheorem{theorem}{Theorem}
\newtheorem{proposition}[theorem]{Proposition}
\newenvironment{proof}[1][Proof]{\noindent\textbf{#1.} }{\ \rule{0.5em}{0.5em}}
\providecommand{\U}[1]{\protect\rule{.1in}{.1in}}
\begin{document}

\title{Veblen effects and broken windows in an environmental OLG model\thanks{An earlier version of this paper was presented at the 10th Nonlinear Economic Dynamics Conference, Naples, Italy; at the 49th AMASES Annual Conference, Florence, Italy; at the Workshop Evolutionary Dynamics, Game Theory and Uncertainty in Economic Modelling, Milan, Italy; at the Systemic Risk Institute Workshop, Prague, Czechia; and at the CONTROPT Workshop 2025, Bari, Italy. We thank the participants for their insightful comments and suggestions. Nicolas Blampied acknowledges the support from the NPO ``Systemic Risk Institute'',  Czechia, LX22NPO5101, funded by the European Union Next Generation EU (Ministry of Education, Youth and Sports, NPO: EXCELES). Usual caveats apply.}
}

\author{Nicolás Blampied$^{1}$ \: $\cdot$ \: Alessia Cafferata$^{2}$ \\ Marwil J. Davila-Fernandez$^{3}$}
\date{
{\normalsize {$^{1}$\emph{Department of Economics, Masaryk University}}}\\
{\normalsize {$^{2}$\emph{Department of Economics and Management, University of Pisa}}}\\
{\normalsize {$^{3}$\emph{Department of Economics, Colorado State University\smallskip}}\smallskip}\\
\bigskip December 2025}
\maketitle

\begin{abstract}
Can constantly comparing ourselves to others lead to overconsumption, ultimately increasing the ecological footprint? How do social comparisons shape green preferences over time? To answer these questions, we develop an environmental Overlapping Generations (OLG) model that explicitly accounts for Veblen effects and allows green preferences to be updated asynchronously, influenced by past environmental conditions and relative status considerations. We show that, along the optimal path, positional spending leads to overconsumption, which is detrimental to the environment. Taxing consumption is counterproductive as it does not directly address the social comparisons issue, leaving the problem unchanged. When the Veblenian mechanism is weak, the introduction of a materialistic ``secular trend'' -- that lowers the importance placed on the public good -- gives rise to two stable equilibria separated by a saddle: one in which agents care about environmental quality as much as consuming, and the other in which they derive utility solely from the latter. Studying the basins of attraction reveals that green investments are highly fragile. Our numerical experiments further indicated that, when Veblen effects are strong, the model depicts endogenous, persistent, aperiodic oscillations. In this case, green preferences fluctuate close to zero, and environmental quality is very low. Taken together, these findings suggest environmental vulnerability grows in parallel with status-driven consumption.

\bigskip

\textbf{Keywords}: Veblen effects; Environmental preferences; OLG models. \bigskip\ 

\textbf{JEL}: Q56, C61, C62, O11.
\end{abstract}

\newpage

\section{Introduction}

Veblen effects describe consumption driven by considerations of relative status. People value consuming both for its use and for the social position that it represents. The latter fuels a sort of arms race, where the living standards of wealthier reference groups become unattainable targets for the rest. Veblen referred to such a process as pecuniary emulation (\href{#Veblen 1899}{Veblen, 1992 [1899]}; see also \href{#Bagwell and Bernheim 1996}{Bagwell and Bernheim, 1996}; \href{#Bowles and Park 2005}{Bowles and Park, 2005}; \href{#Behringer et al 2024}{Behringer et al., 2024}). With inequality on the rise in both developed and developing countries, social media has made the lifestyles of the wealthy more visible than ever. This increased exposure arguably further encourages conspicuous spending, raising questions about environmental sustainability. Can constantly comparing ourselves to others lead to overconsumption and ultimately result in higher ecological footprints? How do social comparisons influence green preferences over time?

To answer these questions, we develop an environmental Overlapping Generations (OLG) model that is novel in explicitly accounting for Veblen effects. As in \href{#Bowles and Park 2005}{Bowles and Park (2005)}, agents are assumed to derive utility from ``effective'' consumption, defined as the difference between an individual's actual consumption and what the reference group s/he aspire to be spends. In addition, and similar to \href{#John and Pecchenino 1994}{John and Pecchenino (1994)}, we assume that people also value environmental quality, while their income can either be consumed and/or allocated to conservation activities. Still, we innovate by allowing green preferences to be updated asynchronously, influenced by past environmental conditions and relative status considerations (see \href{#Hommes et al 2005}{Hommes et al., 2005}). This mechanism is motivated by a ``broken windows'' effect, where agents value the environment less when they do not feel the benefits of investing in it.

We show that, along the optimal path, status-driven consumption leads to overconsumption. The higher the reference group spends, the higher individual consumption will be. Given the budget constraint, this comes at the expense of resources devoted to environmental quality. We additionally document that taxing consumption could be counterproductive as it does not directly address the social comparisons problem. When the Veblenian mechanism is weak, the introduction of a materialistic ``secular trend'' into the updating green preferences function, which exogenously reduces the value agents place on the public good, makes the model compatible with multiple equilibria, though in a non-monotonic way. A strong trend leads to a unique equilibrium in which agents do not value the environment. Still, for intermediary values of this parameter, the model features two extreme equilibria. One in which agents care about environmental quality nearly as much as consumption, and another in which they derive utility only from the latter. These are separated by a saddle point whose separatrix determines the attracting regions of the stable solutions.

Studying the basins of attraction in the multiple equilibria case reveals several disconnected regions, a result that follows from the map being non-invertible. From an economic point of view, this property highlights the sensitivity and complexity of green preferences. A small negative shock near the equilibrium where agents value the environment highly can push the system into the basin of the ``bad'' point, while a small positive shock near the ``bad'' solution can make the economy jump towards the ``good'' one. Our numerical experiments further indicated that the model is compatible with endogenous, persistent, aperiodic oscillations. It is shown that the combination of strong Veblen and broken windows effects is associated with a Neimark-Sacker bifurcation. In this scenario, green preferences fluctuate close to zero, and environmental quality is very low. Such a result is robust to variations in the degree of inertia of green preferences. We document that as we reduce inertia, a Flip bifurcation might also occur. Taken together, these findings suggest environmental vulnerability grows in parallel with status-driven consumption.


This paper contributes to the literature on environmental OLG models, exploring a novel channel that dialogues with discrete choice theory (\href{#Brock and Hommes 1998}{Brock and Hommes 1998}; \href{#Brock and Durlauf 2001}{Brock and Durlauf, 2001}). OLGs stand as a popular tool among scholars interested in green-growth dynamics (e.g. \href{#John and Pecchenino 1994}{John and Pecchenino, 1994}; \href{#Mariani et al 2010}{Mariani et al., 2010}; \href{#de la Croix and Gosseries 2012}{de la Croix and Gosseries, 2012}; \href{#Wei and Aadland 2022}{Wei and Aadland, 2022}; \href{#Jaimes 2023}{Jaimes, 2023}), providing a tractable structure that allows to study situations where agents' actions have consequences that outlive them. Recent examples include fiscal and monetary policy considerations (see \href{#Davila-Fernandez et al 2025}{Davila-Fernandez et al., 2025}; \href{#Grassetti et al 2025}{Grassetti et al., 2025}). Coupled with an evolutionary module, they have proven to be a useful tool to investigate more complex dynamics in nonlinear systems (as in \href{#Zhang 1999}{Zhang, 1999}; \href{#Antoci et al 2016}{Antoci et al., 2016}; \href{#Antoci et al 2019}{2019}; \href{#Caravaggio and Sodini 2023}{Caravaggio and Sodini, 2023}). While discrete-choice theory with social interactions has been used to study environmental problems (e.g. \href{#Zeppini 2015}{Zeppini 2015}; \href{#Cafferata et al 2021}{Cafferata et al., 2021}; \href{#Annicchiarico et al 2024}{Annicchiarico et al., 2024}; \href{#Campiglio et al 2024}{Campiglio et al., 2024}), to the best of our knowledge, we are the first to adopt it in an OLG setup.

The remainder of the paper is organised as follows. Section 2 provides an overview of stylised facts linking the environment to consumption in OECD countries. In Section 3, we present our OLG model, derive the conditions for the existence of a unique or multiple equilibria analytically, and perform the corresponding local stability analysis. We also discuss the implications of adopting a consumption tax on our dynamic system. Section 4 reports a series of numerical experiments that bring concreteness to our main findings. Some final considerations follow.

\section{Inequality, consumption, and pollution}

Given the role of social norms in determining consumption patterns and their potential negative environmental implications, it is useful to look at scatterplots to gain some initial insights into what is going on. Because Veblen effects are driven by comparisons with high-income groups, we employ an inequality measure that is especially sensitive to the upper tail of the distribution. Using data from OECD countries, Fig. \ref{fig:StylizedFacts}, panel (a), illustrates the positive correlation between household consumption and wealth inequality. It also shows, on panel (b), that carbon emissions per capita are positively associated with the share of consumption in GDP.\footnote{According to \href{https://www.oecd.org/en/data/insights/statistical-releases/2024/05/share-of-oecd-economies-in-global-gdp-broadly-stable-at-46-in-2021-compared-to-2017.html}{OECD (2024)}, member countries represented about 46\% of global GDP in 2021. With the emergence of Brazil, Russia, India, and China (BRICs) as major players in the global economy, this share has been reduced. Still, the average weight for the period 2010-2023 is higher, with the OECD accounting for around 60\% of world GDP and 50\% of CO\textsubscript{2} emissions.} Such a relationship is consistent with the idea that increasing inequality drives status-driven consumption through direct social comparisons with richer reference groups. This mechanism is particularly relevant considering that between 1990 and 2020, two-thirds (one-fifth) of global warming is attributable to the wealthiest 10\% (1\%), with individual contributions 6.5 (20) times the average per capita share (\href{#Schongart et al 2025}{Schongart et al., 2025}). 

\begin{figure}[tbp]
  \centering
  \begin{minipage}{3in}
    \centering
    \includegraphics[width=\linewidth]{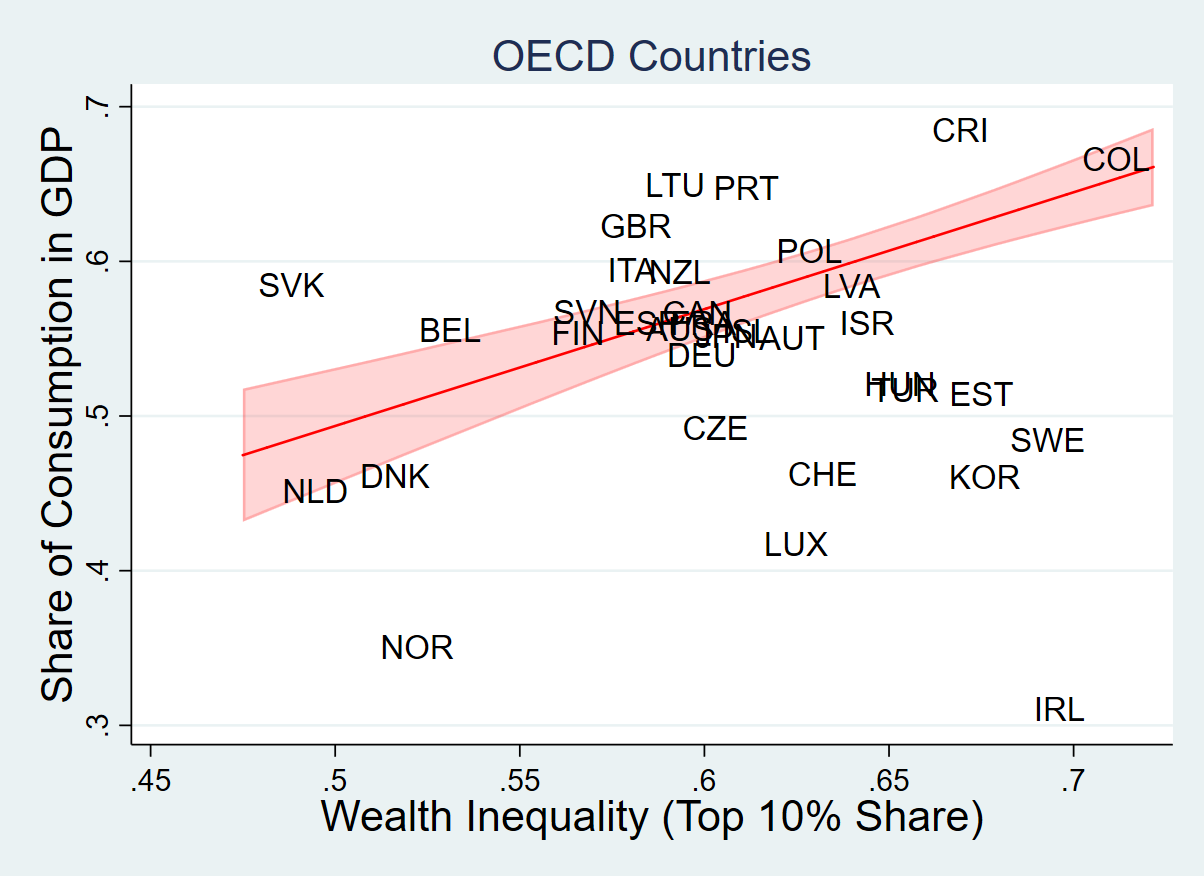}
    \\[-1ex]
    {\footnotesize (a)}
  \end{minipage}%
  \hspace{10mm}%
  \begin{minipage}{3in}
    \centering
    \includegraphics[width=\linewidth]{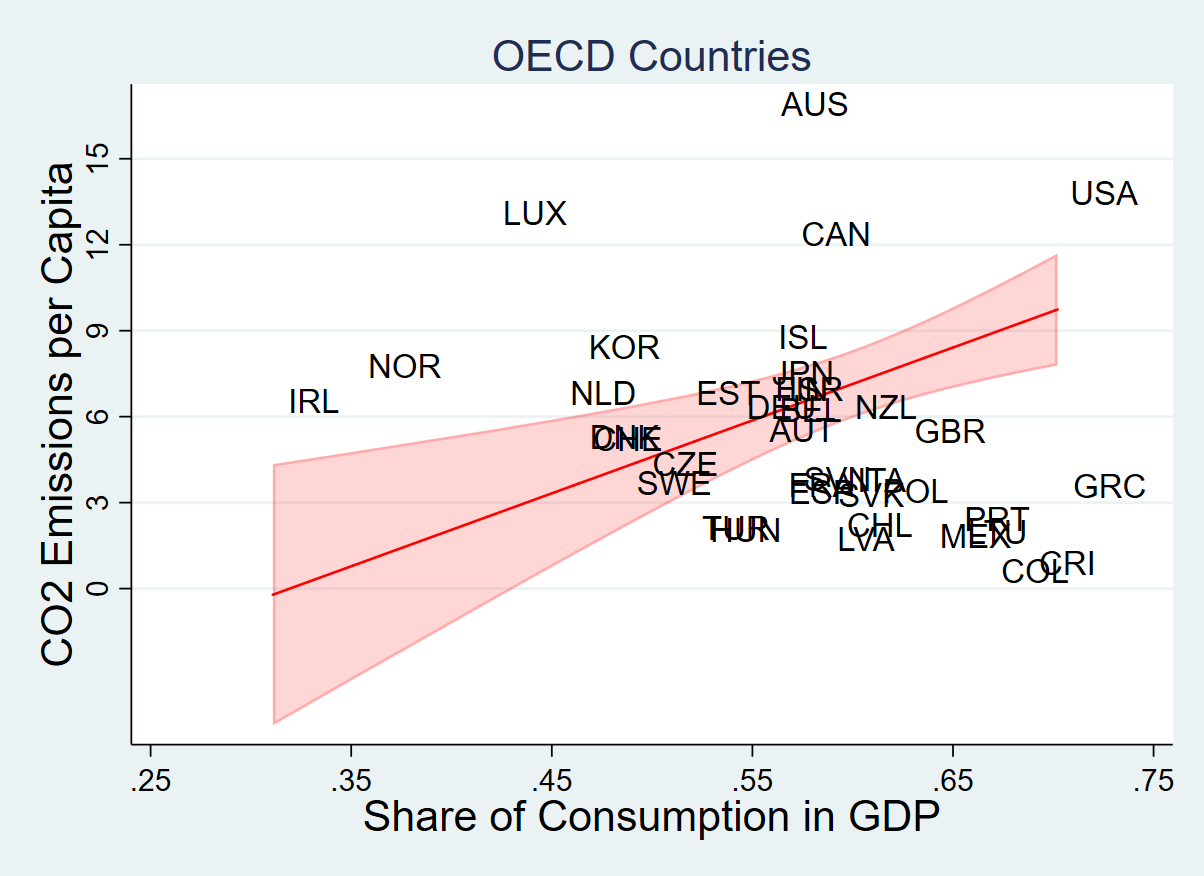}
    \\[-1ex]
    {\footnotesize (b)}
  \end{minipage}
  \caption{Share in Consumption, Wealth Inequality and Carbon Intensity.}
  \label{fig:StylizedFacts}
  \vspace{0.2cm}
  \begin{minipage}{1\textwidth}
    \footnotesize
    Notes: Panel (a) depicts the association between the share of consumption in GDP and wealth inequality, while Panel (b) shows the relationship between carbon emissions per capita and the share of consumption in GDP. Carbon emissions per capita are measured in tons. Both relationships are estimated using population weights. The figure was elaborated by the authors based on data from the World Bank and the World Inequality Database (WID), covering the period 2010–2023.
  \end{minipage}
\end{figure}

The Veblen story, in itself, does not imply the presence of a “broken windows” effect. The latter is related to the idea that initial signs of antisocial behaviour and civil disorder, small but untreated, might lead citizens to withdraw from and/or value the community less, ultimately resulting in higher crime. In the context of environmental sustainability, we reinterpret the effect: the lower the environment's quality, the less agents value it as a public good. Using OECD data, we turn to the relationship between carbon emissions and wealth inequality. Fig. \ref{fig:StylizedFacts1}, panel (a), reports that the top 10\% wealth-share is positively related to per capita carbon emissions. Still, such a relationship does not hold among low-emission countries, as indicated in panel (b).

\begin{figure}[tbp]
  \centering
  \begin{minipage}{3in}
    \centering
    \includegraphics[width=\linewidth]{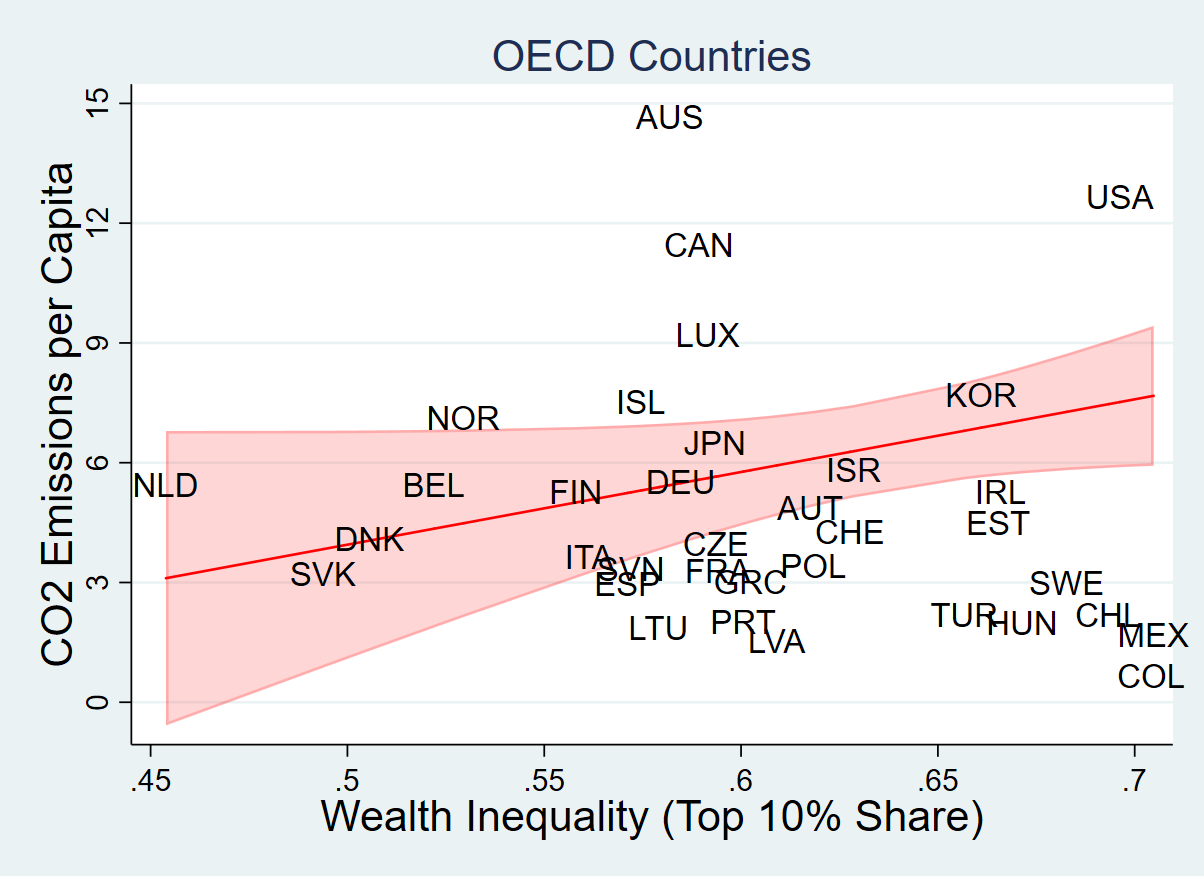}
    \\[-1ex]
    {\footnotesize (a)}
  \end{minipage}
  \hspace{10mm}
  \begin{minipage}{3in}
    \centering
    \includegraphics[width=\linewidth]{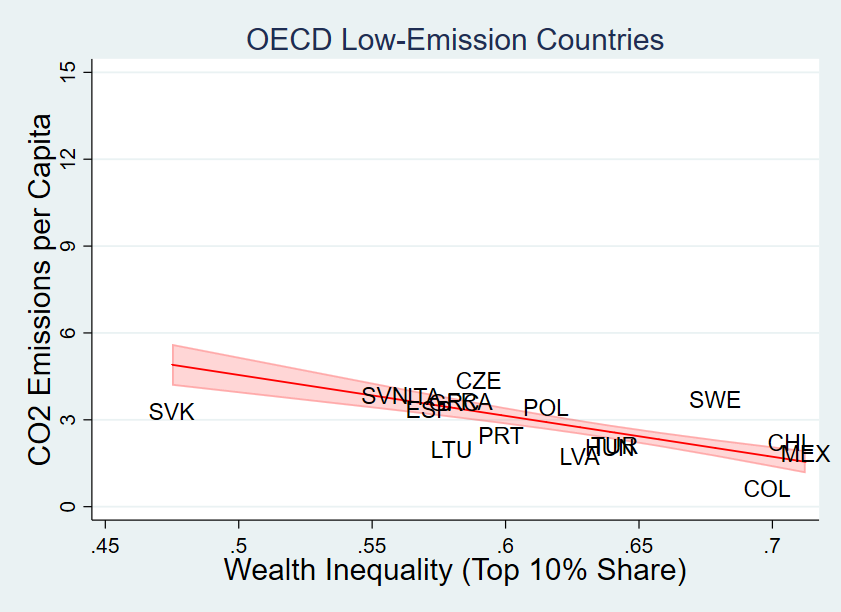}
    \\[-1ex]
    {\footnotesize (b)}
  \end{minipage}
  \caption{Wealth Inequality and Carbon Intensity.}
  \label{fig:StylizedFacts1}
  \vspace{0.2cm}
  \begin{minipage}{1\textwidth}
    \footnotesize
    Notes: Low–emission countries are defined as those OECD countries below the median level of carbon emissions per capita. For further details, see Figure~\ref{fig:StylizedFacts}.
  \end{minipage}
\end{figure}

\begin{figure}[tbp]
  \centering
  \begin{minipage}{3in}
    \centering
    \includegraphics[width=\linewidth]{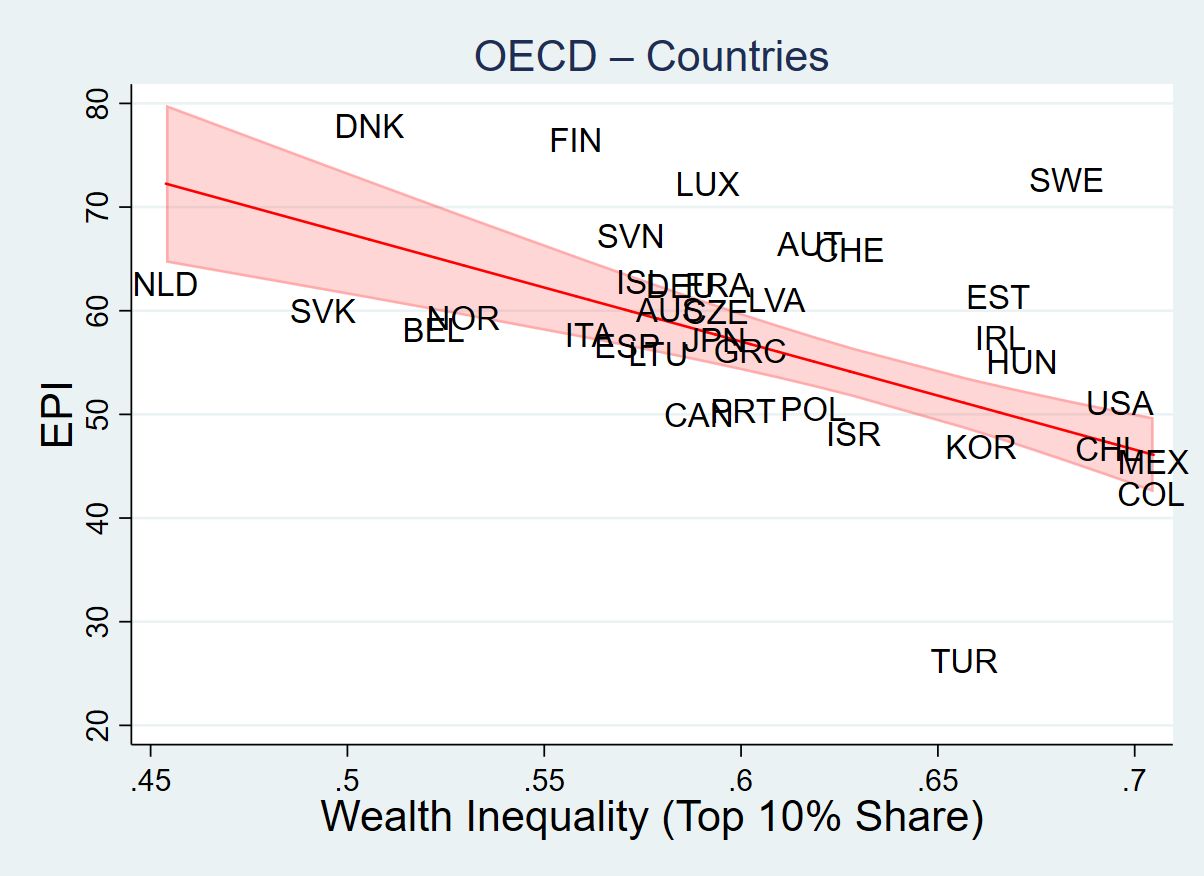}
    \\[-1ex] 
    {\footnotesize (a)}
  \end{minipage}%
  \hspace{10mm}%
  \begin{minipage}{3in}
    \centering
    \includegraphics[width=\linewidth]{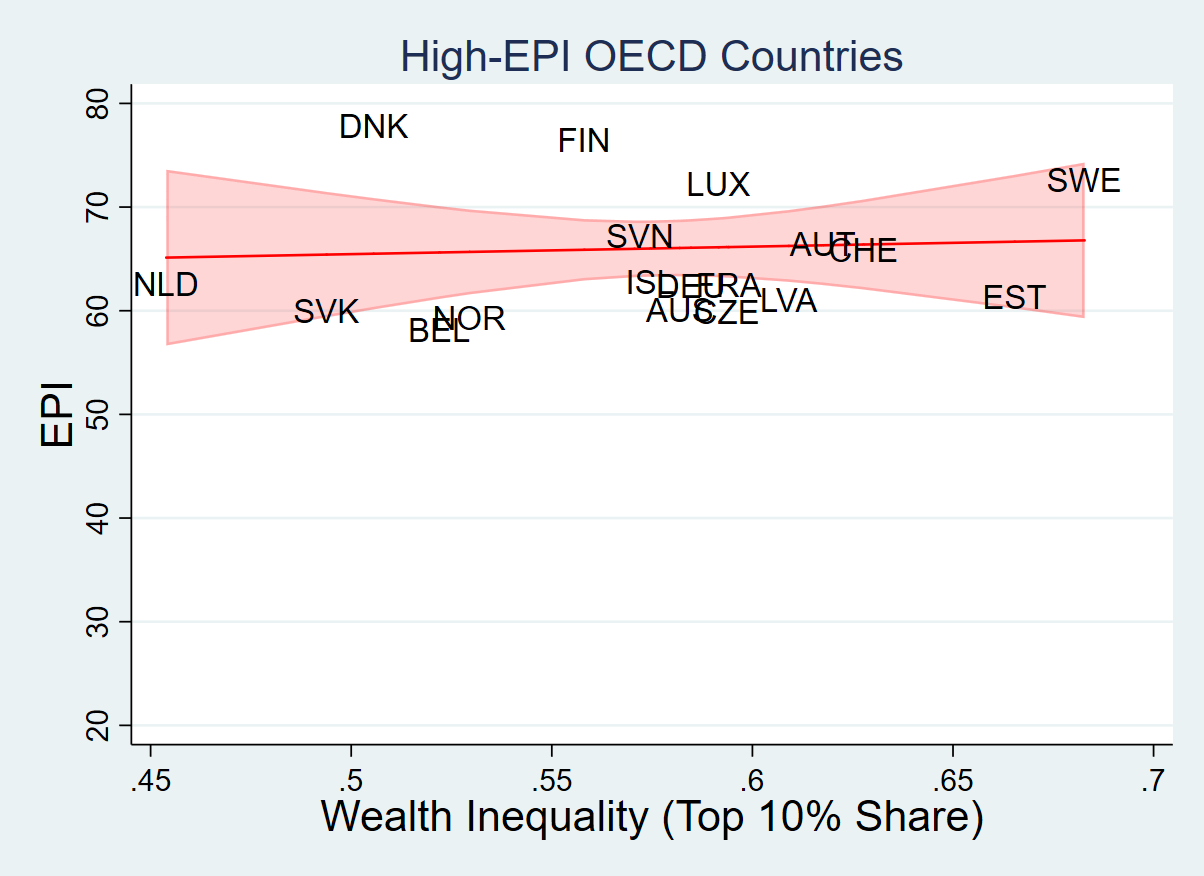}
    \\[-1ex]  
    {\footnotesize (b)}
  \end{minipage}
  \caption{Wealth Inequality and Environmental Performance Index (EPI).}
  \label{fig:StylizedFacts2}
  \vspace{0.2cm}  
  \begin{minipage}{1\textwidth}
    \footnotesize
    Notes: High EPI countries are defined as OECD members with scores above the median. Unlike carbon emissions per capita, higher scores indicate better environmental performance. The EPI was retrieved from the Yale Center for Environmental Law \& Policy, and the data corresponds to the year 2022. 
  \end{minipage}
\end{figure}

If instead we turn to a more comprehensive indicator of environmental quality, we obtain an equivalent correspondence. Fig. \ref{fig:StylizedFacts2}, panel (a), shows a negative relationship between wealth concentration and the Environmental Performance Index (EPI). The EPI is a composite indicator developed by the \textit{Yale Center for Environmental Law \& Policy} and the \textit{Center for International Earth Science Information Network Earth Institute}. It is a national-level measure of how close countries are to the established environmental policy objectives. The EPI ranks up to 180 countries regarding climate change performance, environmental health, and ecosystem vitality. Its range goes from 0 to 100, the highest possible performance. As in the previous case, nonetheless, for high EPI countries, the relationship with the wealth share of the top 10\% disappears.


The change in slope between the two panels in Figs. \ref{fig:StylizedFacts1} and \ref{fig:StylizedFacts2} suggests that the preferences might not be homogeneous. We argue that the positive (negative) relationship between inequality and emissions (or the EPI) that breaks down in low-polluting countries hints at a “broken windows” effect. If so, green preferences should be taken as endogenous to the current state of the environment. Agents care more about this public good when they perceive that others are also doing their part. Environmental preferences are likely to interact with the social norms influencing consumption. We proceed by developing an OLG model that provides a simple yet coherent narrative of the involved channels. 

\section{The model}

We consider an OLG economy where agents live for two periods. They work during the first phase, receiving an exogenously determined income, and choose how much to consume. Savings are then dedicated to improving a public good, the environment, from which they get utility in the second phase. Our study builds on \href{#Bowles and Park 2005}{Bowles and Park (2005)} treatment of Veblen effects in the lifetime choice optimisation problem. Moreover, our environmental module dwells on \href{#John and Pecchenino 1994}{John and Pecchenino (1994)}, which offers a simple and tractable framework that is quite popular in the green-growth literature  (e.g. \href{#Wei and Aadland 2022}{Wei and Aadland, 2022}; \href{#Caravaggio and Sodini 2023}{Caravaggio and Sodini, 2023}; \href{#Jaimes 2023}{Jaimes, 2023}; \href{#Davila-Fernandez et al 2025}{Davila-Fernandez et al., 2025}). Environmental conditions evolve in such a way that consumption has a damaging effect, while conservation investments improve them.

Besides introducing Veblen effects into this family of models for the first time, our study also comes with a second twist. Green preferences are not assumed to be constant but are updated through a discrete choice mechanism (see \href{#Brock and Hommes 1998}{Brock and Hommes, 1998}; \href{#Brock and Durlauf 2001}{Brock and Durlauf, 2001}; \href{#Hommes et al 2005}{Hommes et al., 2005}). We motivate it as a ``broken windows'' effect. The utility an agent derives from the public good increases with its quality. Agents value the environment less when they do not feel the benefits of investing in it. To focus on the interaction between these two mechanisms, we abstract from physical or human capital considerations. This simplification allows us to explore the model's dynamic properties more deeply without loss of generality, as incorporating an explicit damage function in the production technology would actually accentuate the effects from the transmission channels already accounted for. Future research should extend the present framework to a growth OLG set-up. Fig. \ref{Diagram} provides a summarising diagram of the model.

\begin{figure}[tbp]
  \centering
  \includegraphics[width=3.25in]{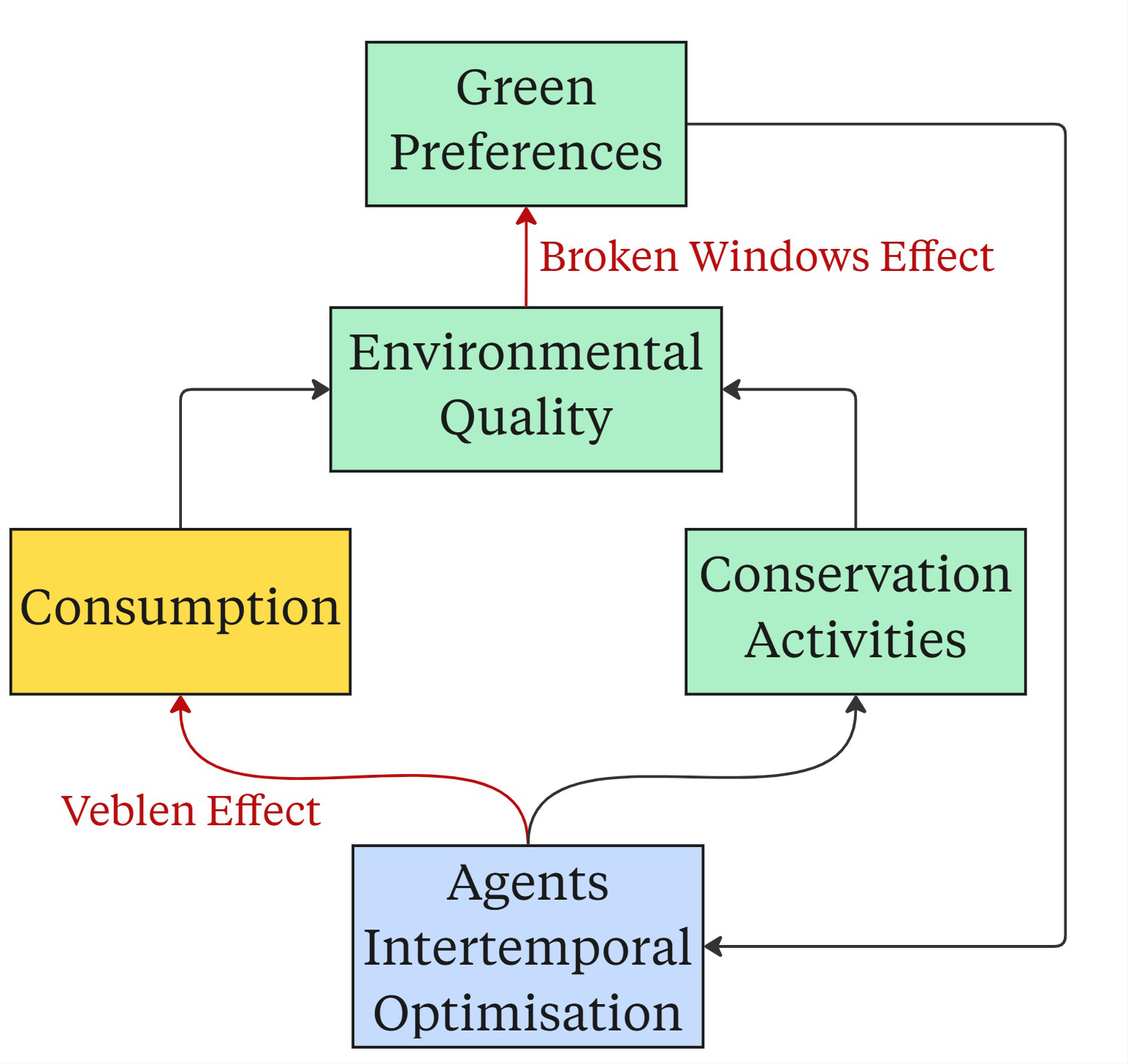}
  \caption{Wealth inequality and carbon intensity.}\label{Diagram}
\end{figure}

\subsection{Veblen, preferences, and the environment}

Suppose agents get utility from \textquotedblleft effective\textquotedblright%
\ consumption ($c^{\ast}$) when they are young, reflecting their immediate need to accumulate goods early in life, and environmental quality ($e$)
at the second stage of their lives, as the marginal benefit of additional consumption diminishes. A simple specification of the corresponding utility function ($U$) is:%
\[
U=\ln c_{t}^{\ast}+\pi_{t}\ln e_{t+1}%
\]
where $\pi$ captures how much they value environmental conditions, i.e. green preferences.

Effective consumption embodies a critical proposition underlying Veblen's account. People compare consumption and choose their spending to be more like a higher reference group (\href{#Bowles and Park 2005}{Bowles and Park, 2005}). It is defined as the difference between how much you actually consume ($c$) and how much the reference group you aspire to be spends ($\tilde{c}$),
so that we write:
\begin{equation}
c_{t}^{\ast}=c_{t}-\upsilon\tilde{c}\label{Veblen}%
\end{equation}
where $\upsilon\geqslant0$ is a parameter capturing how much agents care about
the others' consumption. For any value of $\upsilon>0$, there is some degree of social comparisons that embodies the idea of \textquotedblleft Keeping up with the Joneses\textquotedblright.

We assume the adult population is self-employed, so output equals their income. There is no capital in the economy, and agents receive wages ($w$) that can be either consumed or used in conservation activities ($m$). Therefore, the budget constraint is such that:%
\[
w=c_{t}+m_{t}%
\]

Consumption harms the environment by depleting natural resources and
generating different types of pollution. Still, conservation efforts work in
the opposite direction, contributing to the recovery of biodiversity. As a
simple representation of the interaction between these two forces, we refer to
\href{#John and Pecchenino 1994}{John and Pecchenino (1994)} and write:%
\begin{equation}
e_{t+1}= e_{t}+\sigma m_{t} -\gamma c_{t}\label{JP1994}%
\end{equation}
where $\sigma>0$ and $\gamma>0$ are technical parameters capturing the respective marginal effects of $m$ and $c$ on our environmental quality.

The agent's life-choice problem consists of maximising utility given the budget and environmental constraints:
\begin{align*}
& \underset{c_{t}~m_{t}}{\max}\ln c_{t}^{\ast}+\pi_{t}\ln e_{t+1}\\
& \text{s.t.}\\
w  & =c_{t}+m_{t}\\
e_{t+1}  & = e_{t}-\gamma c_{t}+\sigma m_{t}%
\end{align*}
Substituting both restrictions into the objective function and making use of Eq. (\ref{Veblen}), the optimisation process simplifies to:
\[
\underset{c_{t}}{\max}\ln\left(  c_{t}-\upsilon\tilde{c}\right)  +\pi_{t}%
\ln\left[e_t + \sigma\left(
w-c_{t}\right) -\gamma c_{t}  \right]
\]

The First Order Conditions (FOC) indicate that, in the optimal path, Veblen effects are associated with overconsumption. Consequently, they also reduce resources allocated to conservation activities:
\begin{align}
c_{t}& =\frac{\sigma w+e_{t}}{\left( 1+\pi _{t}\right) \left( \gamma +\sigma
\right) }+\underset{\text{Overconsumption}}{\underbrace{\left( \frac{\pi _{t}%
}{1+\pi _{t}}\right) \upsilon \tilde{c}}}  \nonumber \\
&  \label{FOC} \\
m_{t}& =\frac{\gamma w-e_{t}}{\left( 1+\pi _{t}\right) \left( \gamma +\sigma
\right) }+\left( \frac{\pi _{t}}{1+\pi _{t}}\right) \left( w-\upsilon \tilde{%
c}\right)   \nonumber
\end{align}
Green preferences mediate such trajectories. In the absence of social comparisons, from the first expression in (\ref{FOC}), a higher $\pi$ is related to lower consumption. This result is intuitive, as consumption at $t$ negatively impacts environmental quality in $t+1$. Thus, an agent that values the environment more will optimally consume less. What is less intuitive is that a higher $\pi$ is also associated with lower $m$. The explanation is that if an agent values the environment highly, the level of environmental quality required to achieve the same level of utility will be lower. Therefore, s/he will invest less in conservation activities.

Status-driven considerations complicate things a little. For moderate values of $v \tilde{c}$, we still have a negative relationship between consumption and green preferences. However, the more households care about the reference group ($v$), and/or the higher the consumption of those you aspire to be ($\tilde{c}$), the more overconsumption is fueled by an increasing $\pi$. This counterintuitive finding is a consequence of the interaction between social comparisons with the utility agents derive from the environment. As environmental quality deteriorates due to status-driven consumption, individuals derive less utility from the public good and are pushed to rely even more heavily on positional consumption. The magnitude of this compensatory behaviour is proportional to how much agents value environmental quality in their utility functions. For similar reasons, the higher $v \tilde{c}$ is, stronger green preferences paradoxically reduce conservation investments.

Substituting (\ref{FOC}) into Eq. (\ref{JP1994}), we obtain the dynamics of
environmental quality as a function of agents' level of green preferences and Veblen
effects:%
\[
e_{t+1}=\left(  \frac{\pi_{t}}{1+\pi_{t}}\right)  \left[ e_{t}+\sigma
w-\left(  \sigma+\gamma\right)  \upsilon\tilde{c}\right]
\]

\subsection{Asynchronous updating of beliefs}

While in a more conventional set-up, preferences are assumed to be fixed and exogenously given, we instead take an evolutionary stance on the issue and allow them to vary. We describe the dynamics of green preferences by referring to the discrete choice literature (see \href{#Brock and Hommes 1998}{Brock and Hommes 1998}; \href{#Brock and Durlauf 2001}{Brock and Durlauf, 2001}). More specifically, we rely on specifications used when there is asynchronous updating of beliefs (as in \href{#Hommes et al 2005}{Hommes et al., 2005}). Agents face a probability of valuing the environment ($p$), so
that:
\begin{equation}
\pi_{t+1}=\alpha\pi_{t}+\left(  1-\alpha\right)  p_{t}\label{Altruism}%
\end{equation}
where $0<\alpha<1$ captures the degree of inertial of the behaviour or trait.

To formalise how such a probability changes over time, we refer to the ``broken windows'' metaphor. Introduced by \href{#Wilson and Kelling 1982}{Wilson and Kelling's (1982)} short article published in \textit{The Atlantic}, broken windows stand as a sign of anti-social behaviour and civil disorder, initially small and that goes untreated, but that leads citizens to become fearful and withdraw from the community, ultimately leading to higher disorder and crime. The metaphor links the occurrence of serious crimes with visible signs of incivility in a community, having gained its space in the economic-criminology literature (e.g. \href{#Corman and Mocan 2002}{Corman and Mocan, 2002}; \href{#van der Weele et al 2021}{van der Weele et al., 2021}; \href{#Miceli and Segerson 2024}{Miceli and Segerson, 2024}). We import this concept and apply it to how agents face environmental conditions: they value the environment less when they do not feel the benefits of investing in it. In mathematical terms:%
\begin{equation}
p_{t}=\frac{1}{1+\exp\left(  -\beta e_{t}+ \rho\right)  }\label{Probability}%
\end{equation}
where $\beta$ is the intensity of choice parameter (for a literature review, see \href{#Franke and Westerhoff 2017}{Franke and Westerhoff, 2017}) and $\rho \geq 0$ represents a materialistic ``secular trend'' that exogenously reduces how much agents value the environment.

Substituting Eq. (\ref{Probability}) into (\ref{Altruism}), we obtain:%
\begin{equation*}
\pi_{t+1}=\alpha\pi_{t}+\frac{1-\alpha}{1+\exp\left(  -\beta e_{t}+ \rho  \right)  }\label{AltruismVeblen}%
\end{equation*}

\subsection{Dynamic system}

Our two-dimensional map is given by:
\begin{equation}
M_{1}:=\left\{
\begin{array}
[c]{l}%
e_{t+1}=\left(  \dfrac{\pi_{t}}{1+\pi_{t}}\right)  \left[  e_{t}+\sigma
w-\left(  \sigma+\gamma\right)  \upsilon\tilde{c}\right]  \\
\pi_{t+1}=\alpha\pi_{t}+\dfrac{1-\alpha}{1+\exp\left(  -\beta e_{t}+ \rho  \right)  }%
\end{array}
\right.  \label{M1_2D}%
\end{equation}
In steady-state, $e_{t}=e_{t+1}=\bar{e}$ and $\pi_{t}=\pi_{t+1}=\bar{\pi}$. Therefore, the equilibrium conditions imply:%
\begin{align}
\bar{e}  & =\bar{\pi}\left[  \sigma w-\left(  \sigma+\gamma\right)
\upsilon\tilde{c}\right] \nonumber \\ \label{eq_cond} \\
\bar{\pi}  & =\frac{1}{1+\exp\left(  -\beta \bar{e}+ \rho  \right)
} \nonumber
\end{align}
The first expression in (\ref{eq_cond}) establishes a linear relationship between environmental quality and green preferences that crosses the origin. It will be positive only if Veblen effects are weak. The second expression shows instead $\bar{\pi}$ as an S-shaped function of $\bar{e}$. The materialistic ``secular trend'' moves the S-curve along the horizontal axis, with higher values of $\rho$ shifting it to the right.

We are ready to state and prove the following Propositions regarding the existence of a unique or multiple equilibria, and their local stability properties.

\begin{samepage}
\begin{proposition}
    When Veblen effects are sufficiently strong, such that:
    \begin{equation*}
        \frac{v \tilde{c}}{w}>\frac{\sigma}{\sigma + \gamma}
    \end{equation*}
    The dynamic system $M_1$ admits a unique equilibrium solution with $\bar{e}\leq 0$. On the other hand, when status-driven consumption is moderate or weak:
        \begin{equation*}
        \frac{v \tilde{c}}{w}<\frac{\sigma}{\sigma + \gamma}
    \end{equation*}
    Then, $M_1$ admits up to three equilibria. The cardinality of the solution set is non-monotonic and exhibits a hump-shaped dependence on $\rho$.
\end{proposition}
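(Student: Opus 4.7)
The plan is to reduce the equilibrium system (\ref{eq_cond}) to a scalar fixed-point problem by substituting the linear relation $\bar{e} = A\bar{\pi}$, with $A := \sigma w - (\sigma+\gamma)\upsilon\tilde{c}$, into the S-curve. This yields
\[
\bar{\pi} = G(\bar{\pi}) := \frac{1}{1+\exp(\rho - \beta A\, \bar{\pi})},
\]
so that the two regimes in the proposition correspond exactly to the sign of $A$: the strong-Veblen condition $\upsilon\tilde{c}/w > \sigma/(\sigma+\gamma)$ is equivalent to $A<0$, and the converse to $A>0$. Since the asynchronous updating rule (\ref{Altruism}) keeps $\pi_t$ in $[0,1]$ for $\alpha\in(0,1)$ and $p_t\in(0,1)$, I restrict attention to fixed points $\bar{\pi}\in[0,1]$.

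For $A<0$ the exponent $\rho-\beta A\,\bar{\pi}$ is strictly increasing in $\bar{\pi}$, hence $G$ is strictly decreasing. The map $\bar{\pi}\mapsto\bar{\pi}-G(\bar{\pi})$ is therefore strictly increasing, equals $-G(0)<0$ at $\bar{\pi}=0$ and $1-G(1)>0$ at $\bar{\pi}=1$, so the intermediate value theorem delivers a unique solution $\bar{\pi}\in(0,1)$. Because $A<0$, we immediately get $\bar{e}=A\bar{\pi}<0$, which proves the first part.

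For $A>0$, $G$ is strictly increasing, S-shaped, and maps $[0,1]$ into $(0,1)$, so $G(0)>0$ and $G(1)<1$ guarantee at least one fixed point. To cap the count at three I would use
\[
G'(\bar{\pi}) = \beta A\, G(\bar{\pi})\bigl(1-G(\bar{\pi})\bigr),
\]
which is unimodal with maximum $\beta A/4$; hence $G'(\bar{\pi})=1$ has at most two roots, $\bar{\pi}-G(\bar{\pi})$ has at most two critical points, and therefore at most three zeros. The transition from one to three equilibria is governed by the fold condition $G(\bar{\pi})=\bar{\pi}$ together with $G'(\bar{\pi})=1$. Using the first identity to rewrite $G(\bar{\pi})(1-G(\bar{\pi}))=\bar{\pi}(1-\bar{\pi})$ at any fixed point, the fold condition reduces to $\beta A\,\bar{\pi}(1-\bar{\pi})=1$, solvable only when $\beta A>4$, with roots $\bar{\pi}_\pm = \tfrac{1}{2}\pm\tfrac{1}{2}\sqrt{1-4/(\beta A)}$ and critical thresholds
\[
\rho_\pm = \beta A\,\bar{\pi}_\pm + \ln\!\left(\frac{1-\bar{\pi}_\pm}{\bar{\pi}_\pm}\right).
\]

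Finally, to obtain the hump-shape in $\rho$, I would exploit that $\rho$ enters $G$ only as an additive shift in the exponent, so increasing $\rho$ translates the logistic rigidly to the right and the cardinality of the fixed-point set can change only when a fold is crossed. At $\rho=0$ the curve starts at $G(0)=1/2$ and lies above the diagonal up to a unique crossing near $\bar{\pi}=1$; for $\rho$ very large, the inflection $\rho/(\beta A)$ exits $[0,1]$ and the curve stays close to zero with a single crossing near $\bar{\pi}=0$. When $\beta A>4$ a pair of extra equilibria is therefore born at $\rho_-$ and annihilated at $\rho_+$, producing the non-monotonic hump-shape; when $\beta A\leq 4$ no fold exists and the unique equilibrium persists for all $\rho\geq 0$. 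The main obstacle I anticipate lies precisely in this last step: verifying the ordering $0<\rho_-<\rho_+$ and excluding any bifurcation other than the two folds. I expect to handle it by noting that the signed area between $G$ and the diagonal is monotone in $\rho$ on each side of the central region, so the three-equilibrium window is an interval and the cardinality is single-peaked in $\rho$.
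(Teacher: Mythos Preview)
Your proof is correct and rests on the same geometric picture as the paper's argument in Appendix~A.1 --- a line through the origin meeting an S-shaped logistic --- but you carry it out analytically rather than graphically. The paper's proof is almost entirely pictorial: it observes that for $A<0$ the line slopes down and the logistic up, so they cross once with $\bar{e}\leq 0$, and that for $A>0$ shifting the S-curve rightward with $\rho$ produces first one, then three, then again one intersection, with reference to Figs.~\ref{rho} and~\ref{strong_weak_veblen}. Your reduction to the scalar fixed-point map $\bar{\pi}=G(\bar{\pi})$ buys information the paper leaves implicit: the unimodality of $G'$ with maximum $\beta A/4$ caps the count at three, the threshold $\beta A>4$ is necessary for multiplicity to occur at all, and the explicit fold values $\rho_\pm$ delimit the three-equilibrium window. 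The ordering $0<\rho_-<\rho_+$ that you flag as the remaining obstacle in fact drops out cleanly: $\bar{\pi}_-<1/2$ makes both summands in $\rho_-=\beta A\bar{\pi}_-+\ln\bigl((1-\bar{\pi}_-)/\bar{\pi}_-\bigr)$ positive, and the symmetry $\bar{\pi}_+=1-\bar{\pi}_-$ gives $\rho_-+\rho_+=\beta A$ together with $\rho'>0$ on $(\bar{\pi}_-,\bar{\pi}_+)$, so $\rho_-<\rho_+$. Your signed-area heuristic is therefore not needed.
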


\begin{proof}
    See Appendix A.1.
\end{proof}
\end{samepage}

\medskip



\begin{samepage}
\begin{proposition} \label{MainProp}
    An equilibrium point of $M_1$ is locally stable provided that:
\begin{align*}
    \eta & <1 - \alpha \\ \eta & < \left(  1+\alpha\right) \left(1+2\bar{\pi}\right)  \\
    -\frac{1+\eta}{\bar{\pi}} &<1-\alpha
\end{align*}
where $\eta=\left(\bar{e}/\bar{\pi}  \right)\left(\partial \pi_{t+1}/ \partial e_t \right)$ is the elasticity of green preferences with respect to environmental conditions. A violation of the first inequality while the other two are satisfied is associated with a Fold bifurcation. When only the second condition is violated, a Flip bifurcation occurs. Finally, if all conditions except the last are satisfied, we have a Neimark-Sacker bifurcation.
\end{proposition}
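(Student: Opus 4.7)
The plan is to apply the Jury (Schur--Cohn) criteria to the Jacobian of $M_1$ evaluated at an arbitrary equilibrium $(\bar{e},\bar{\pi})$, and then interpret each boundary of the stability region as a specific codimension--one bifurcation.

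First I would compute the Jacobian of $M_1$. The partial derivatives are straightforward: $\partial e_{t+1}/\partial e_t = \pi_t/(1+\pi_t)$, $\partial e_{t+1}/\partial \pi_t = [e_t+\sigma w-(\sigma+\gamma)v\tilde c]/(1+\pi_t)^2$, $\partial \pi_{t+1}/\partial \pi_t = \alpha$, and $\partial \pi_{t+1}/\partial e_t$ equals the logistic derivative times $(1-\alpha)$. The key simplification comes from the first equilibrium condition in \eqref{eq_cond}, which gives $\sigma w-(\sigma+\gamma)v\tilde c = \bar{e}/\bar{\pi}$, so that $\bar e+\sigma w-(\sigma+\gamma)v\tilde c = \bar e(1+\bar\pi)/\bar\pi$. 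Substituting back yields the clean Jacobian
\[
J(\bar e,\bar\pi)=\begin{pmatrix} \dfrac{\bar\pi}{1+\bar\pi} & \dfrac{\bar e}{\bar\pi(1+\bar\pi)} \\[4pt] \phi & \alpha \end{pmatrix},
\qquad \phi:=\left.\frac{\partial \pi_{t+1}}{\partial e_t}\right|_{(\bar e,\bar\pi)},
\]
so that the elasticity introduced in the statement becomes $\eta=\bar e\,\phi/\bar\pi$.

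Next I would read off the trace $T=\bar\pi/(1+\bar\pi)+\alpha$ and determinant $D=(\alpha\bar\pi-\eta)/(1+\bar\pi)$, and apply the three Jury conditions for a two-dimensional map: (i) $1-T+D>0$, (ii) $1+T+D>0$, (iii) $1-D>0$. Multiplying each inequality through by the positive factor $1+\bar\pi$ and collecting terms reduces (i) to $\eta<1-\alpha$, reduces (ii) to $\eta<(1+\alpha)(1+2\bar\pi)$ (using that $1+2\bar\pi+\alpha+2\alpha\bar\pi=(1+\alpha)(1+2\bar\pi)$), and reduces (iii) to $1+\bar\pi(1-\alpha)+\eta>0$, which after dividing by $\bar\pi>0$ is exactly the third inequality in the statement. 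This establishes local asymptotic stability.

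For the bifurcation classification I would use the standard correspondence between the boundaries of the Jury triangle and the eigenvalue crossings. Violating only (i) means an eigenvalue crosses $+1$, since $1-T+D$ equals the characteristic polynomial $P(\lambda)=\lambda^2-T\lambda+D$ evaluated at $\lambda=1$; this is a Fold. Violating only (ii) means $P(-1)=0$, an eigenvalue crosses $-1$, giving a Flip. Violating only (iii) means $D=1$, and combined with the fact that (i) and (ii) still hold so both eigenvalues are complex (the product of moduli equals $|D|=1$), this is a Neimark--Sacker bifurcation. I would verify the non-real eigenvalue condition $T^2-4D<0$ along the NS locus, i.e.\ on $D=1$, amounts to $T^2<4$, which holds because $T=\bar\pi/(1+\bar\pi)+\alpha\in(0,2)$.

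The main obstacle I anticipate is purely notational bookkeeping: carrying the equilibrium identity $\sigma w-(\sigma+\gamma)v\tilde c=\bar e/\bar\pi$ through the Jacobian so that the cross-term collapses to $\bar e/[\bar\pi(1+\bar\pi)]$, and then recognising the factorisation $1+2\bar\pi+\alpha+2\alpha\bar\pi=(1+\alpha)(1+2\bar\pi)$ needed to recover the Flip boundary in the form given. Everything else is routine linear algebra and standard discrete-time bifurcation theory, so no deep structural argument is required beyond a careful use of the equilibrium conditions from \eqref{eq_cond}.
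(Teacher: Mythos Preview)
Your proposal is correct and follows essentially the same route as the paper: compute the Jacobian at equilibrium, use the steady-state identity $\sigma w-(\sigma+\gamma)v\tilde c=\bar e/\bar\pi$ to collapse the cross-term, read off trace and determinant, and reduce the three Jury conditions to the stated inequalities with the standard Fold/Flip/Neimark--Sacker identification. The only cosmetic difference is that you simplify $j_{12}$ up front whereas the paper carries the unsimplified expression into the determinant before invoking the equilibrium condition; your added check that $T\in(0,2)$ ensures complex eigenvalues on the $D=1$ boundary is a nice touch the paper leaves implicit.
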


\begin{proof}
    See Appendix A.2.
\end{proof}
\end{samepage}

\bigskip

Together, these Propositions highlight some major consequences of Veblen effects on environmental quality. When people care too much about reference-group consumption ($v$) and/or the rich spend excessively ($\tilde{c}$), the system becomes more likely to settle into a unique equilibrium in which agents place no value on the environment simply because it is so degraded that they no longer perceive any worth in it. Additionally, because $\eta$ depends on $v\tilde{c}$, status-driven spending increases the likelihood of persistent endogenous oscillations. A more moderate Veblenian channel, by contrast, allows the map to admit a more desirable equilibrium in which agents value private consumption and the public good more evenly. This last outcome still depends, however, on the strength of the underlying materialist trend.

\subsection{Taxing consumption}

In an attempt to address the problem of overconsumption and its negative
ecological implications, some scholars have proposed the creation of a
consumption tax ($\tau$).\footnote{An intuitive example with a strong Veblenian component refers to the fast-fashion sector. It specialises in the production of clothing that often mimics popular styles from fashion labels, big-name brands, and independent designers. The ``eco-tax'' on fast-fashion aims to curb environmentally harmful consumption, and has been recently adopted (or proposed) in various European contexts. For instance, the law passed by the French Parliament in 2025 imposed an additional tax per item on ultra-fast fashion clothing (see also Niinimäki et al., 2020).} We turn our attention to this proposal in the context of the present model. The intertemporal optimisation problem becomes:
\begin{align*}
& \underset{c_{t},m_{t}}{\max}\ln c_{t}^{\ast}+\pi_{t}\ln e_{t+1}\\
& \text{s.t.}\\
w  & =\textcolor{red}{\left(1+\tau\right)}c_{t}+m_{t}\\
e_{t+1}  & =e_{t}-\gamma c_{t}+\sigma m_{t}%
\end{align*}
which can be simplified and rewritten as:%
\[
\underset{c_{t}}{\max}\ln\left(  c_{t}-v\tilde{c}\right)  +\pi_{t}\ln\left[
e_{t}-\gamma c_{t}+\sigma \left(  w- \textcolor{red}{\left(1+\tau\right)}  c_{t}\right)
\right]
\]

The FOCs now imply:
\begin{align*}
c_{t}  & =\frac{\sigma w+e_{t}}{\left(  1+\pi_{t}\right)  \left[
\gamma+\sigma\textcolor{red}{\left(1+\tau\right)}  \right]  }+\underset
{\text{{\footnotesize Overconsumption}}}{\underbrace{\left(  \frac{\pi_{t}%
}{1+\pi_{t}}\right)  v\tilde{c}}}\\
m_{t}  & =\frac{\gamma w-\textcolor{red}{(1+\tau)}e_{t}}{\left(  1+\pi_{t}\right)  \left[
\gamma+\sigma\textcolor{red}{\left(1+\tau\right)}  \right]  }+\left(  \frac{\pi_{t}}%
{1+\pi_{t}}\right) \left(  w-v\tilde{c}\right)
\end{align*}
Because $c$ and $\tau$ are inversely related, it does seem that, at least in principle, such a policy could succeed. But notice that $m$ is also negatively related to $\tau$. Moreover, as green preferences are not exogenously given and react to environmental quality through the broken
windows effect, we need to study the new dynamic system.\footnote{If we further assume that all resources raised by the tax ($\tau c_t$) are used to conservation activities, the problem immediately simplifies back to our baseline case. This happens because, substituting the budget into the environmental constraint, we have that $e_{t+1} =e_{t}-\gamma c_{t}+\sigma \left[ w-\textcolor{red}{(1+\tau)}c_t+\textcolor{blue}{\tau c_t}\right]$.}

\medskip

\begin{proposition}
    Introducing a consumption tax alters the first-order conditions of the intertemporal optimisation problem but does not modify the structure of the dynamic system in (\ref{M1_2D}). After substituting the tax-adjusted FOCs into the environmental constraint and the asynchronous updating of beliefs, the reduced-form map governing the evolution of ($e,\pi$) is identical to that of the baseline model.
\end{proposition}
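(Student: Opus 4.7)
The plan is to derive the law of motion for $e_{t+1}$ under the tax using only the FOC for $c_t$ together with the modified budget constraint, and to check that the resulting reduced-form equation has the same functional form as the first coordinate of $M_1$. First I would use $w=(1+\tau)c_t+m_t$ to eliminate $m_t$ from the environmental constraint, obtaining
\[
e_{t+1} = e_t + \sigma w - [\gamma + \sigma(1+\tau)]\,c_t.
\]
This is the convenient form, because the coefficient multiplying $c_t$ here is \emph{exactly} the denominator appearing in the tax-adjusted FOC for $c_t$.

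The key step is then to substitute the tax-adjusted expression for $c_t$ into the right-hand side above. The factor $[\gamma+\sigma(1+\tau)]$ cancels against the denominator of the non-positional piece of the FOC, while the positional (Veblen) component gets multiplied by that same factor. After merging the fractions over $1+\pi_t$, I expect to arrive at
\[
e_{t+1} = \frac{\pi_t}{1+\pi_t}\bigl\{e_t + \sigma w - [\gamma + \sigma(1+\tau)]\,v\tilde{c}\bigr\},
\]
which coincides with the first equation of $M_1$ under the reparametrisation $\gamma\mapsto\gamma+\sigma\tau$ (so that the baseline $\sigma+\gamma$ is replaced by $\sigma+(\gamma+\sigma\tau)$). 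Because the updating rule for $\pi_{t+1}$ is defined through the logistic probability $p_t$, which depends only on $e_t$ and on parameters unrelated to the tax, the second coordinate of $M_1$ is automatically unaffected, and the claim about the reduced-form map then follows immediately.

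The main obstacle is bookkeeping rather than conceptual: it is tempting to substitute both FOCs (for $c_t$ and $m_t$) into the environmental law, which introduces additional $(1+\tau)$ factors and obscures the cancellation. Working with $c_t$ alone and deriving $m_t$ from the budget keeps the algebra transparent. A secondary subtlety worth flagging in the written-up proof is that ``identical'' in the statement should be read as ``structurally identical up to a reparametrisation'': the tax is absorbed into an effective pollution coefficient $\tilde{\gamma}:=\gamma+\sigma\tau$ and opens no new transmission channel, which is precisely the economic message of the Proposition and justifies using Proposition~\ref{MainProp} verbatim to analyse stability under $\tau>0$.
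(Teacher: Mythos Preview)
Your route---eliminating $m_t$ via the budget and substituting only the $c_t$ FOC---is cleaner than the paper's, which plugs both tax-adjusted FOCs into $e_{t+1}=e_t-\gamma c_t+\sigma m_t$ and works through a long chain of cancellations. Your algebra is correct.

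There is, however, a substantive discrepancy in the endpoint. The paper's Appendix~A.3 arrives at
\[
e_{t+1}=\frac{\pi_t}{1+\pi_t}\bigl[e_t+\sigma w-(\gamma+\sigma)v\tilde{c}\bigr],
\]
i.e.\ a map \emph{literally} identical to the baseline, with $\tau$ nowhere in sight; you instead obtain $[\gamma+\sigma(1+\tau)]\,v\tilde{c}$ and read ``identical'' as ``up to the reparametrisation $\gamma\mapsto\gamma+\sigma\tau$.'' The difference traces to the paper's displayed $m_t$: its second piece is written as $\frac{\pi_t}{1+\pi_t}(w-v\tilde{c})$, but the budget $(1+\tau)c_t+m_t=w$ together with the (correct) $c_t$ FOC forces $\frac{\pi_t}{1+\pi_t}\bigl(w-(1+\tau)v\tilde{c}\bigr)$ there---indeed, the paper's $c_t$ and $m_t$ fail to satisfy $(1+\tau)c_t+m_t=w$ unless $\tau=0$. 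With the corrected $m_t$, both routes coincide with your expression. So your caveat about ``structurally identical up to reparametrisation'' is not merely cautious wording; it is the accurate statement. Since $\gamma$ enters $M_1$ only through the combination $(\sigma+\gamma)v\tilde{c}$, Propositions~1 and~\ref{MainProp} still carry over verbatim under $\tilde{\gamma}:=\gamma+\sigma\tau$, and the economic message---that the tax opens no new transmission channel and leaves the Veblen overconsumption term in the $c_t$ FOC untouched---survives intact.
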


\begin{proof}
    See Appendix A.3.
\end{proof}

\bigskip

The consumption tax reduces overall spending, but households try to compensate by also reducing their green investments. Consequently, the evolution of environmental conditions remains unchanged. The policy's ineffectiveness can be traced to its failure to address the underlying problem of social comparisons.




\section{Numerical simulations}

To provide a more concrete view of the properties of the system, we rely on numerical experiments. We begin by focusing on cases with a unique equilibrium solution but strong or increasing Veblen effects. Then, we assess those cases with multiple equilibria under a weak Veblenian mechanism. Since we are not calibrating a real economy, the results reported here are qualitative in nature and should be interpreted accordingly. We chose parameter values similar in magnitude to those in \href{#Antoci et al 2016}{Antoci et al. (2016)}, \href{#Caravaggio and Sodini 2023}{Caravaggio and Sodini (2023)}, and \href{#Davila-Fernandez et al 2025}{Davila-Fenandez et al. (2025)}. For completeness and reproducibility, we report below each figure the parameter set used to generate the corresponding scenario. Replication code is available at Mendeley Data.

\subsection{Instability under strong Veblen effects}

Our first experiment consists of studying the interaction between the responsiveness of green preferences to environmental quality, as captured by the intensity of choice parameter ($\beta$), and the extent to which agents care about the consumption of the reference group ($v$). Fig. \ref{beta_v} reports bifurcation diagrams showing that an increase in social comparisons is related to lower environmental quality and with agents valuing the environment less. For small values of $\beta$, the negative relationship is monotonic. Still, as we increase this parameter to 100 or 1000, a Neimark-Sacker bifurcation occurs, and we observe quasi-periodic oscillations in our variables of interest. The higher $\beta$, the lower $v$ has to be for the bifurcation to happen. Not only are Veblen effects associated with overconsumption and higher ecological footprints, but they may also lead to instability at very low levels of environmental quality.

\begin{figure}[tbp]
  \centering
  \includegraphics[width=2.7in]{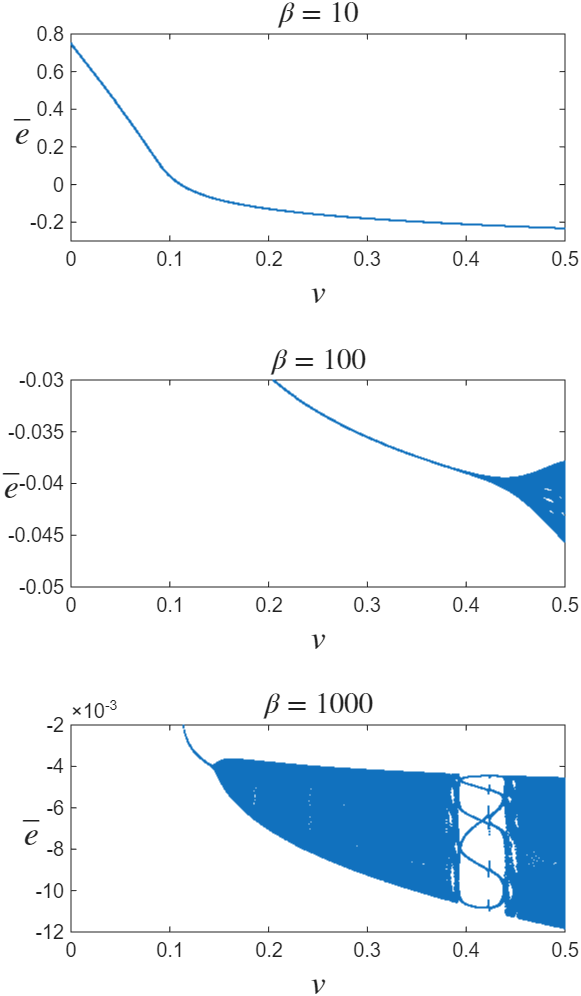}
  \hspace{1mm}
  \includegraphics[width=2.7in]{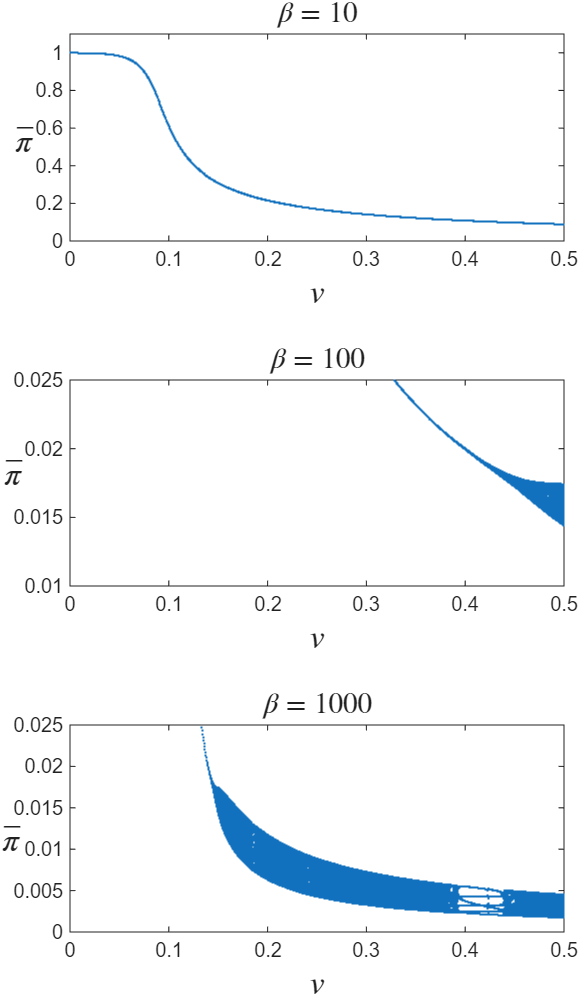}
  \caption{Social comparisons leading to environmental degradation and instability. Parameters $\alpha=0.75$, $\sigma=0.75$, $\gamma=1.5$, $w=1$, $\tilde{c}=3$, $\rho=0$. The bifurcation diagram starts with $\frac{v \tilde{c}}{w}<\frac{\sigma}{\sigma + \gamma}$, but the Neimark-Sacker occurs only when $\frac{v \tilde{c}}{w}>\frac{\sigma}{\sigma + \gamma}$.}\label{beta_v}
\end{figure}

The role of consumption driven by considerations of relative status as a source of instability can also be seen in Fig. \ref{alpha_v}. Blue and orange colours enable us to compare bifurcation diagrams for $\alpha=0.49$ and $\alpha=0.5$, respectively. An increase in the degree of inertia in green preferences brings stability to the system. In the limit, $\alpha=1$ implies preferences do not change, and our system becomes a 1-dimensional linear map. Reducing $\alpha$ not only anticipates the Neimark-Sacker bifurcation but also dramatically increases the amplitude of the oscillations. The blue plots suggest that a Flip bifurcation also occurs in this case. Overall, we confirm the negative relationship between $\bar{e}$, $\bar{\pi}$, and $v$. Endogenous fluctuations again appear to emerge only for very low or negative values of environmental quality, in a context where agents almost do not value it.

\begin{figure}[tbp]
  \centering
  \includegraphics[width=4.25in]{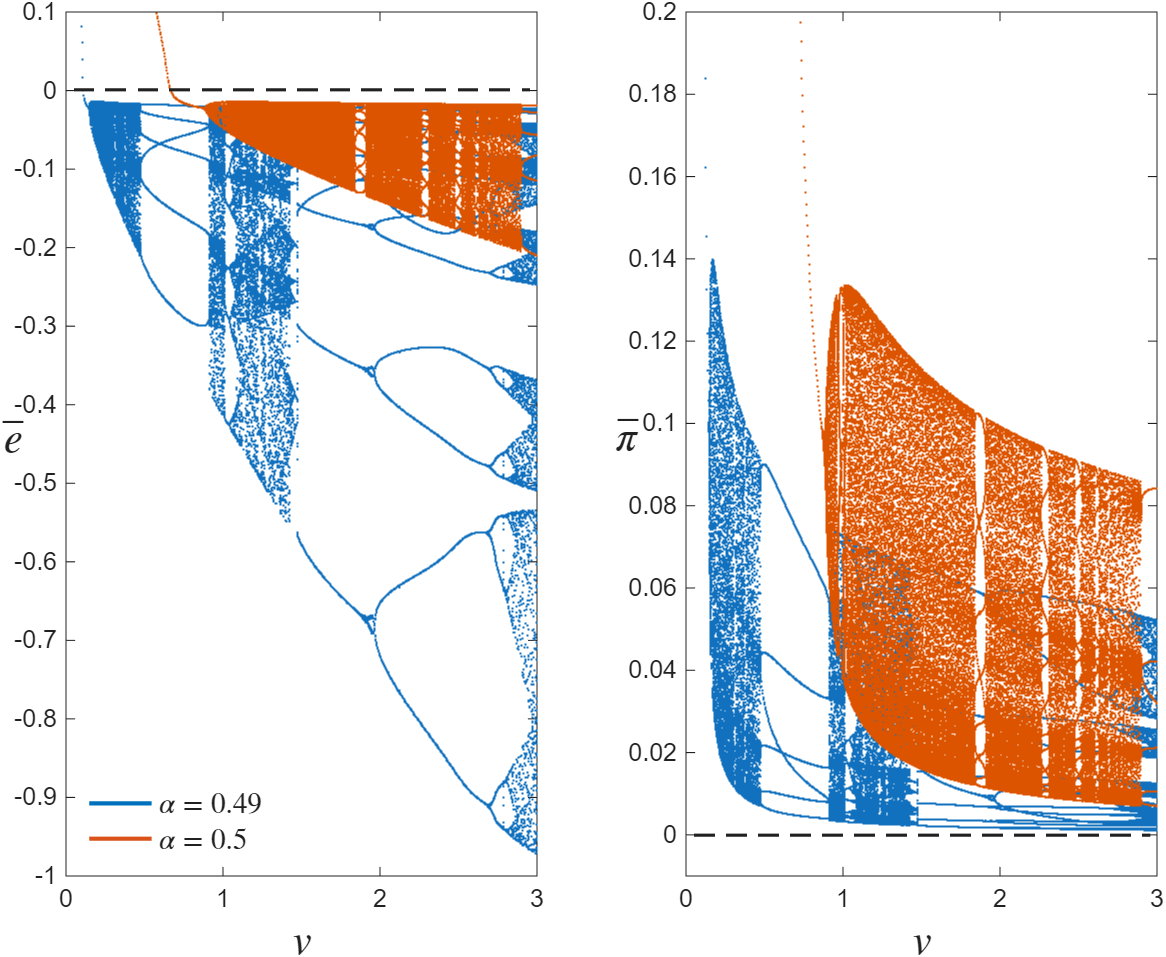}
  \caption{Social comparisons under different levels of green preferences inertia. Parameters $\beta=100$, $\sigma=0.75$, $\gamma=1.5$, $w=1$, $\tilde{c}=3$, $\rho=0$. The diagram starts with $\frac{v \tilde{c}}{w}<\frac{\sigma}{\sigma + \gamma}$, but bifurcations occur only when $\frac{v \tilde{c}}{w}>\frac{\sigma}{\sigma + \gamma}$.}\label{alpha_v}
\end{figure}

To confirm the stabilising role of preference inertia, Fig. \ref{alpha} reports the bifurcation diagram for $\alpha \in (0,1)$. Recall that for $\alpha=1$, green preferences do not change in our asynchronous updating of beliefs setup. We find that very low values of this parameter are associated with violent fluctuations. Under the current calibration, we need $\alpha \gtrapprox 0.6$ to regain stability. Notice that we are assuming Veblen effects are sufficiently high to guarantee that the bifurcation will eventually occur, despite the relatively small intensity of choice. That explains the extremely low values of environmental quality in this scenario.

\begin{figure}[tbp]
  \centering
  \includegraphics[width=4in]{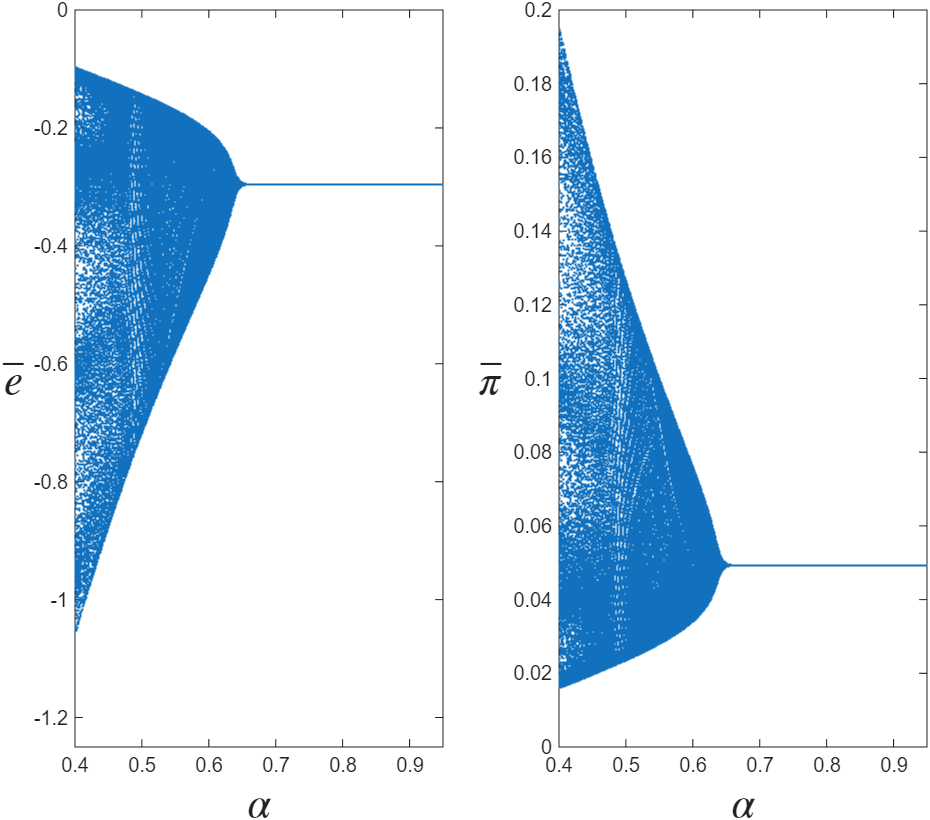}
  \caption{Instability under different levels of green preferences inertia. Parameters $\beta=10$, $v=1$, $\sigma=0.75$, $\gamma=1.5$, $w=1$, $\tilde{c}=3$, $\rho=0$. Notice that $\frac{v \tilde{c}}{w}>\frac{\sigma}{\sigma + \gamma}$ is satisfied $\forall \alpha$.}\label{alpha}
\end{figure}

\subsection{Materialistic trends under weak Veblen effects}

We are now ready to introduce the materialistic ``secular trend'' into the discussion. Our simulations so far have assumed $\rho=0$. Fig. \ref{rho} shows how this effect leads to the emergence and disappearance of multiple equilibria. To focus on this mechanism, we assume social comparisons are quite small. Panel (a), on the left, shows that when $\rho=0$, agents value the environment as much as consumption, given that $\bar{\pi}=1$. Panel (a), on the right, reports that setting $\rho=5$ leads us to the opposite situation. We still have a unique equilibrium, but agents do not value the environment at all. Increasing $\rho$ moves the S-shaped isocline along the horizontal axis, being responsible for the emergence and disappearance of multiple equilibria. In fact, for intermediate values of this parameter, a Pitchfork bifurcation occurs and two additional equilibria emerge, separated by a saddle. In the ``good'' solution, agents value the environment as much as consumption, while in the ``bad'' one, they get utility only from spending. Panel (b) plots the respective basins of attraction when $\rho=2.6$.

\begin{figure}[tbp]
  \centering
  \includegraphics[width=3.75in]{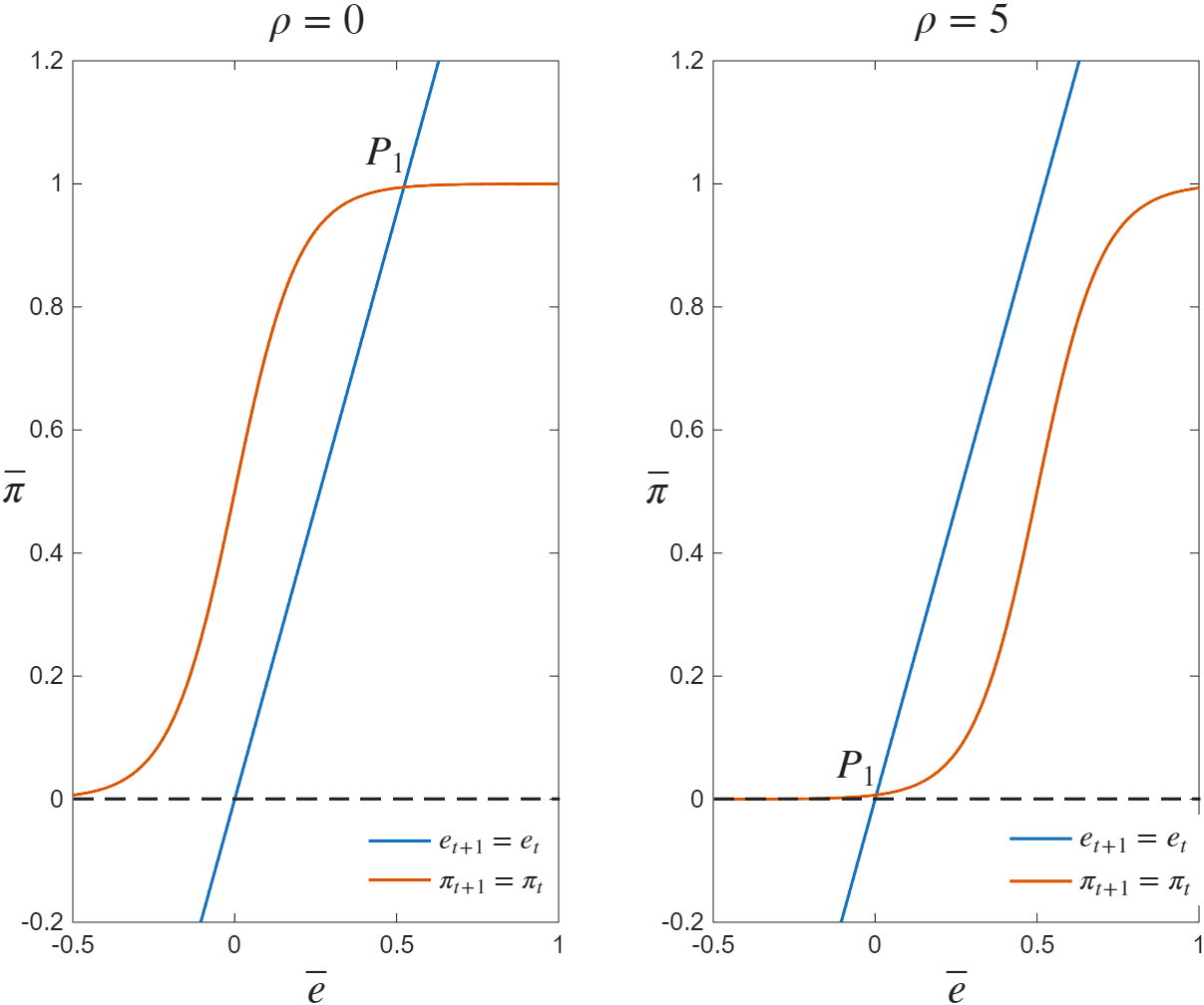}\vspace{10mm}
  \includegraphics[width=3.5in]{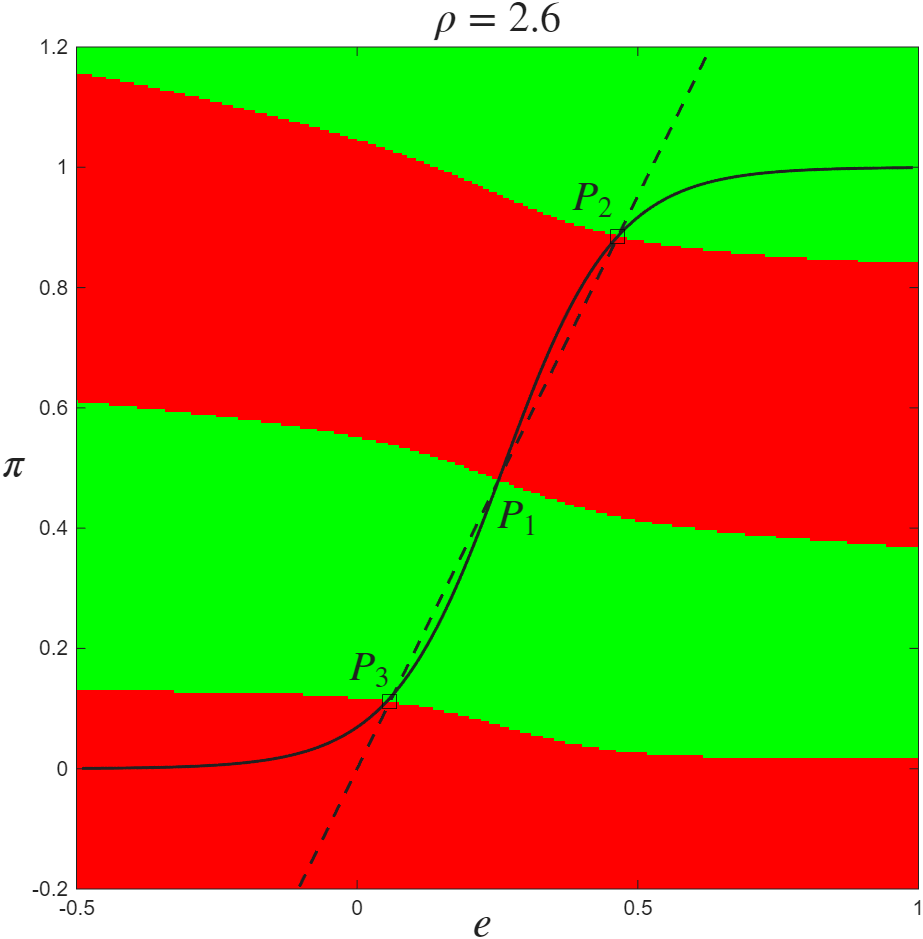}
  \caption{Multiple equilibria and the materialistic secular trend.}\label{rho}
    \vspace{0.2cm}
  \begin{minipage}{1\textwidth}
        \footnotesize
Notes: In panel (a), $P_1$ indicates the unique solution of the system and is determined by the intersection between blue and orange isoclines. In panel (b), the three equilibria correspond to the intersection of the continuous and dotted black isoclines. Green colours indicate convergence to the upper equilibrium $P_2$, while we mark in red all initial conditions leading to the lower equilibrium $P_3$. Parameters $\alpha=0.9$, $\sigma=0.75$, $\gamma=1.5$, $w=1$, $\tilde{c}=1$, $\beta=10$, $v=0.1$.
    \end{minipage}
\end{figure}

A striking feature of our basins of attraction is the presence of several disconnected regions. Such a result follows from our map being non-invertible. We mark in red those initial conditions leading to the ``bad '' equilibrium, while, in green, those leading to the ``good'' solution. Suppose an economy finds itself in the superior equilibrium. A small shock up or to the right will only temporarily perturb the system as it will converge to its initial state. However, any minor disturbance that reduces environmental quality or green preferences causes the system to jump to the inferior solution. This feature works in both ways. This means that once in a bad state, a small positive shock to preferences or environmental quality is enough to jump back to the ``good'' equilibrium. From an economic point of view, the alternation between green and red regions highlights the fragility of the economic-ecological relationship.

In our last set of experiments, we investigate the sensitivity of the basins of attraction to changes in the degree of inertia in preferences. Fig. \ref{rho_alpha} reports our main findings. Panel (a) sets $\alpha=0.5$. In panel (b), we increase this parameter to $\alpha=0.75$. Veblen effects are assumed sufficiently small so we can visualise the role of the  materialistic ``secular trend'' in generating multistability.\footnote{A multistable dynamical system is one that admits multiple stable equilibrium states or attractors, which may (or not) exhibit (a)periodic characteristics. Its long-run behaviour is highly sensitive to initial conditions, with trajectories converging to different stable states depending on their starting point (for a comprehensive review of theory and applications, see Pisarchik and Feudel, 2014). In economics, this feature can be understood as capturing path dependence. It also supports the idea that historical circumstances shape long-term outcomes.} The main distinguishing feature is that less inertia reduces the basins of attraction of the inferior equilibrium. This is an interesting result because it indicates that while less inertia increases instability under strong Veblen effects, it might bring stability to the ``good'' equilibrium when Veblen effects are moderate, even when agents are subject to an exogenous force that reduces how much they value the environment.

\begin{figure}[tbp]
  \centering
  \includegraphics[width=2.75in]{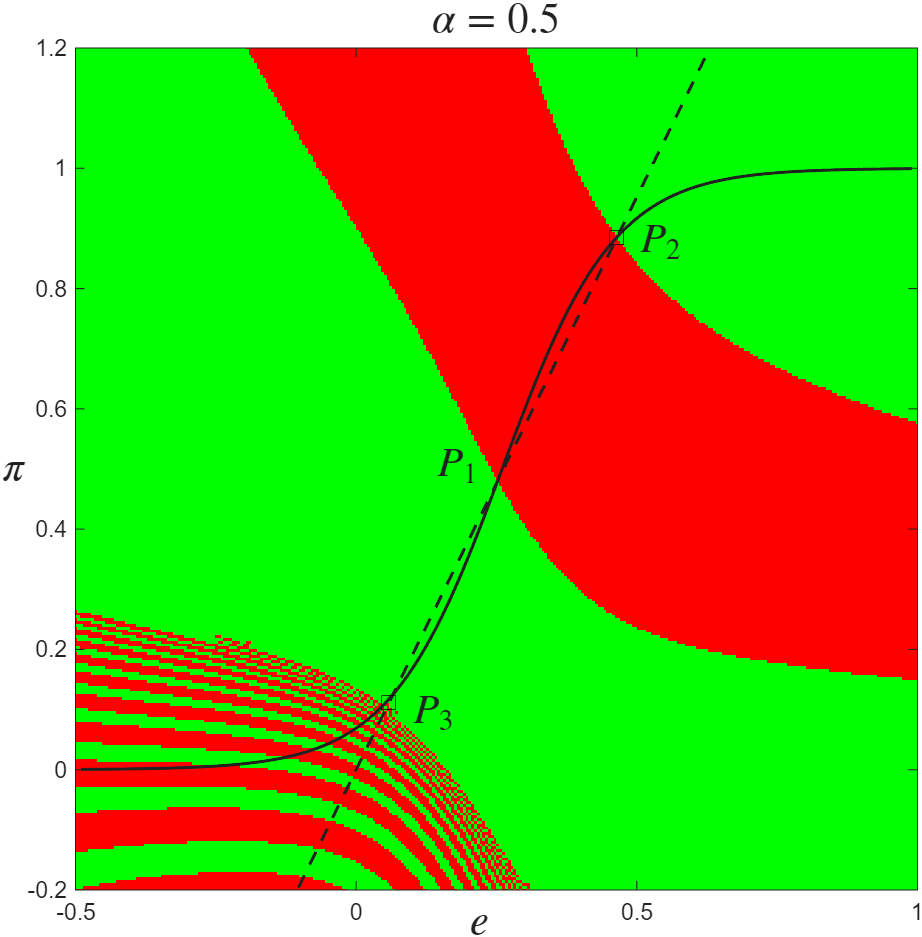}\hspace{10mm}
  \includegraphics[width=2.75in]{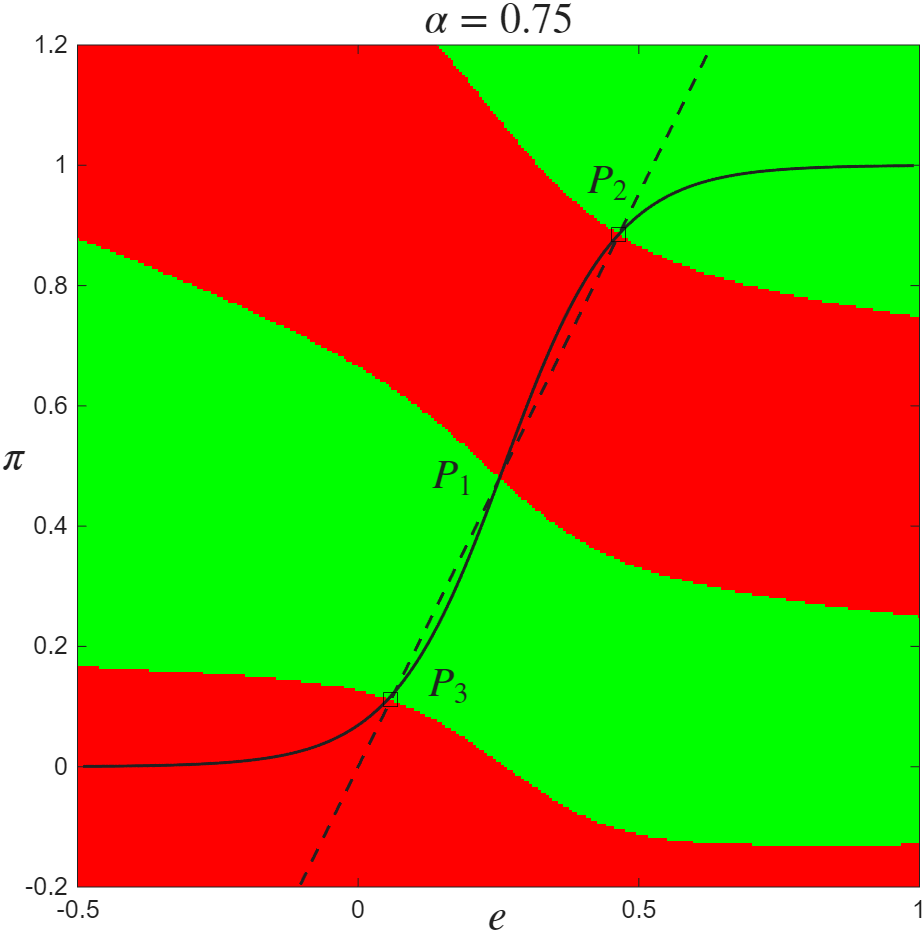}
  \caption{Basins of attraction.}\label{rho_alpha}
  \vspace{0.2cm}
  \begin{minipage}{1\textwidth}
        \footnotesize
Notes: Parameters $\rho=2.6$, $\sigma=0.75$, $\gamma=1.5$, $w=1$, $\tilde{c}=1$, $\beta=10$, $v=0.1$. Red colours indicate convergence to the lower equilibrium $P_3$ while, in green, we mark all initial conditions leading to the upper equilibrium $P_2$.
    \end{minipage}
\end{figure}


\section{Final considerations}

This paper developed an OLG model to study the interplay between Veblen effects, consumption and environmental damages. Building on two consolidated studies in both fields, we show that, along the optimal path, status-driven consumption leads to overconsumption. The latter is increasing with the reference group spending. For a given resource endowment, such relative status considerations imply lower investments in conservation activities, which is detrimental to the environment. The lower the quality of this public good, the less agents value it, reinforcing the increase in ecological footprints. This last effect was motivated by importing the concept of ``broken windows'' from the economic-criminology literature. We formalised it by innovatively applying discrete-choice theory with asynchronous belief updating.

When the Veblenian mechanism is weak, and we introduce a materialistic ``secular trend'' that captures ``the times we live in'' into the updating green preferences function, the model becomes compatible with multiple equilibria. A weak (strong) trend leads to a unique equilibrium solution in which agents (do not) value the environment. However, for intermediate values of this parameter, the model features two extreme stable solutions. The study of the basins of attraction in this case indicates the presence of several disconnected regions. Such a property can be interpreted as indicating the sensitivity and complexity of green preferences, as well as their interplay with other elements related to well-being.

We perform a set of numerical simulations that, though qualitative, allow us to get a better understanding of the main properties of the model. It was shown that the combination of strong Veblen and broken windows effects is associated with a Neimark-Sacker bifurcation. If both are sufficiently strong, green preferences fluctuate close to zero, so that agents value consumption only, while environmental quality is very low. This result is robust to variations in the degree of inertia in green preferences. When there is little persistence in agents' tastes, a Flip bifurcation might also occur. We do not claim that our findings are directly related to international experience or the trajectories of a particular country. Nonetheless, they suggest environmental vulnerability grows in parallel with status-driven consumption.

While highly stylised, our model offers insights that can be extended in several directions. For instance, we showed that taxing consumption may be counterproductive, as it does not directly address the underlying problem of social comparisons. The root of the issue lies in the high inequality that fuels them. However, we did not provide a formal account of the behaviour of the reference group(s). Moreover, the productive structure could be enriched by allowing for decisions on time allocation between work and leisure. In fact, \href{#Bowles and Park 2005}{Bowles and Park (2005)} showed that Veblen effects are associated with longer working hours. The negative consequences of environmental degradation on income, for example, through a damage function, also deserve further attention. These extensions would further join recent efforts to better understand the interplay between inequality, social comparisons, and environmental dynamics.

\newpage

\appendix

\numberwithin{equation}{section}
\numberwithin{figure}{section}

\section{Mathematical Appendix}

\subsection{Proof of Proposition 1}

We begin by demonstrating the existence of a unique equilibrium solution when Veblen effects are sufficiently strong:
\begin{equation*}
        \frac{v \tilde{c}}{w}>\frac{\sigma}{\sigma + \gamma}
    \end{equation*}
Notice that the first equilibrium condition of $M_1$ is given by:
\begin{equation*}
\bar{e}  =\bar{\pi}\left[  \sigma w-\left(  \sigma+\gamma\right)
\upsilon\tilde{c}\right]
\end{equation*}
Therefore, the corresponding isocline is linear, downward sloping, and crosses the origin. This case is illustrated by the blue line in Fig. \ref{strong_weak_veblen}, panel (a). The second isocline:
\begin{equation*}
\bar{\pi} =\frac{1}{1+\exp\left(  -\beta \bar{e}+ \rho  \right)
} \nonumber
\end{equation*}
is represented by the orange line. It is an upward sloping S-shaped curve that contains the point $(\bar{e},\bar{\pi})=(0,0.5)$ provided that $\rho=0$. As we increase this last parameter, the orange line moves to the right. Both isoclines can cross each other at only one point. Thus, the system admits a unique solution in such a scenario. Because the blue line necessarily crosses the origin and the orange line never crosses the horizontal axis, then $\bar{e}\leq 0$.

Moving on to the case in which Veblen effects are weak:
\begin{equation*}
        \frac{v \tilde{c}}{w}<\frac{\sigma}{\sigma + \gamma}
    \end{equation*}
Then, from the first isocline, $\bar{e}$ and $\bar{\pi}$ are positively related. This case is represented in Fig. \ref{strong_weak_veblen}, panel (b). Given that both isoclines have a positive slope, but the second is S-shaped, the model admits a unique equilibrium only when $\rho=0$. As we increase the materialistic trend, the orange curve moves to the right, and the model admits three equilibria, as shown in Fig. \ref{rho}. If we continue to increase this parameter, we recover a unique equilibrium.

\begin{figure}[h]
  \centering
  \includegraphics[width=2.75in]{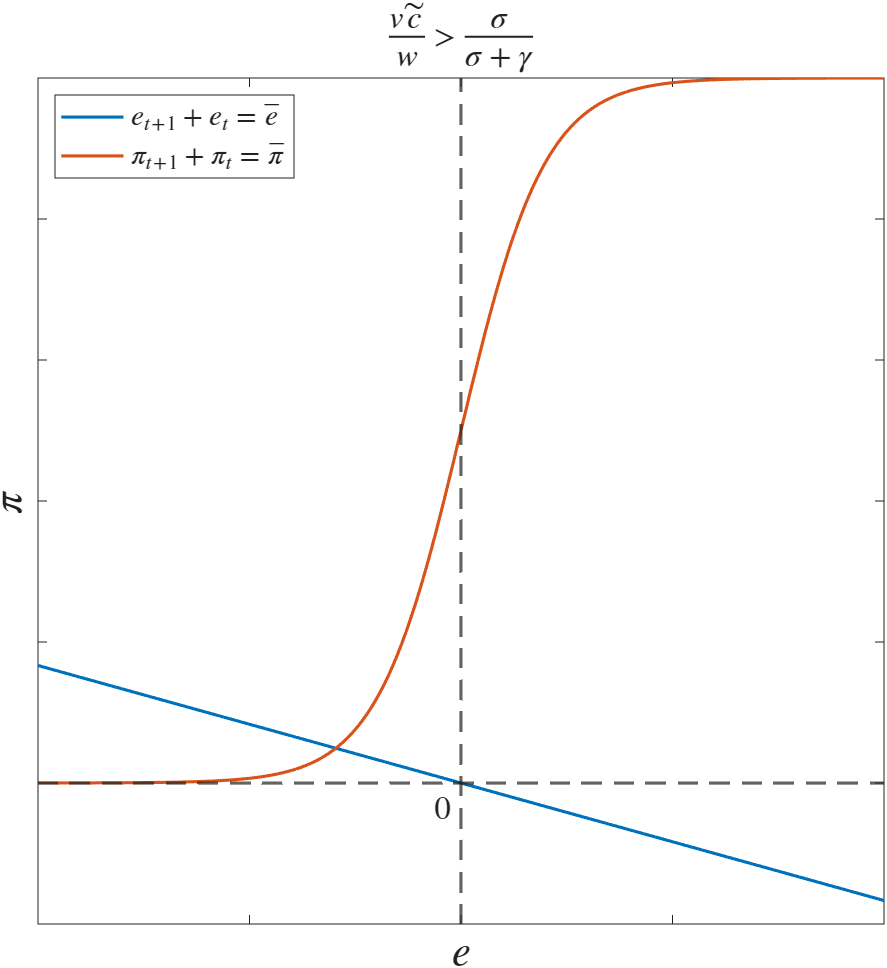} \hspace{10mm}
  \includegraphics[width=2.75in]{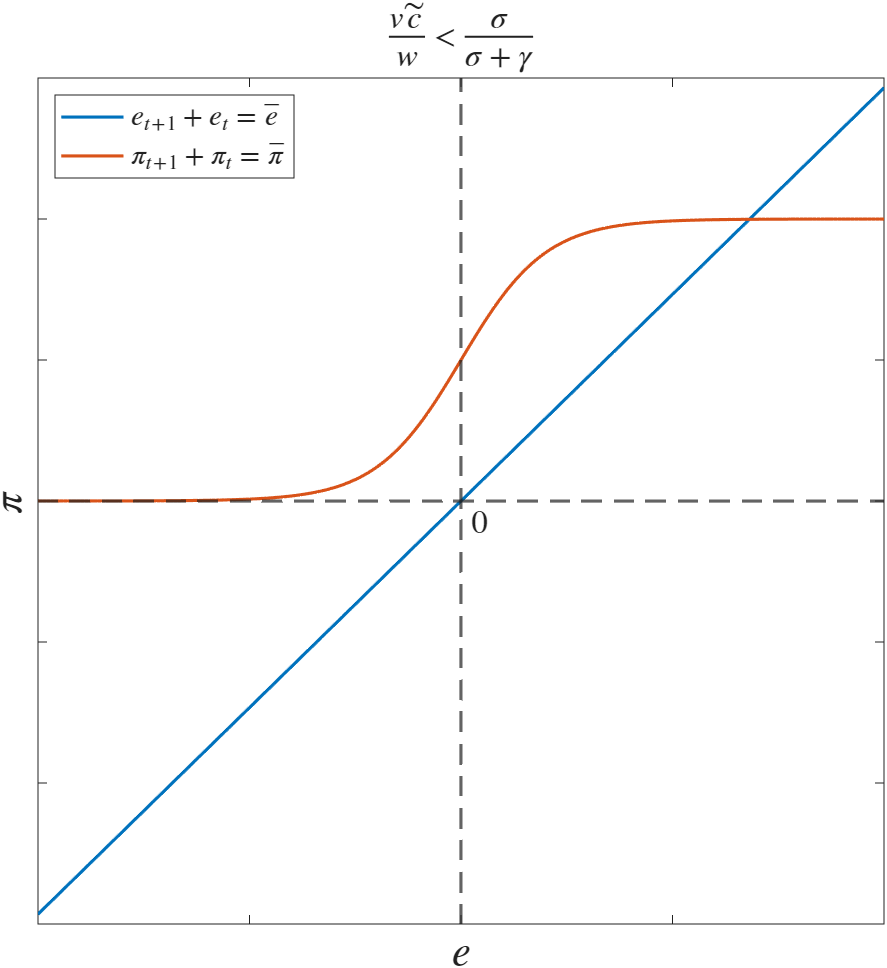}
  \caption{Strong vs weak Veblen effects and the possibility of multiple equilibria.}\label{strong_weak_veblen}
\end{figure}

\subsection{Proof of Proposition 2}

The Jacobian matrix of our 2-dimensional map evaluated at the neighborhood of a generic equilibrium point is given by:%
\[
J=%
\begin{bmatrix}
j_{11} & j_{12}\\
j_{21} & j_{22}%
\end{bmatrix}
\]
where%

\begin{align*}
j_{11} &  =\left.  \frac{\partial e_{_{t+1}}}{e_{_{t}}}\right\vert
_{e_{_{t+1}}=e_{_{t}}=\bar{e}}=\frac{\bar{\pi}}{1+\bar{\pi}}\\
j_{12} &  =\left.  \frac{\partial e_{_{t+1}}}{\pi_{t}}\right\vert _{e_{_{t+1}%
}=e_{_{t}}=\bar{e}}=\frac{\bar{e}+\sigma w-\left(  \sigma+\gamma\right)
v\tilde{c}}{\left(  1+\bar{\pi}\right)  ^{2}}%
\end{align*}%
\begin{align*}
j_{21} &  =\left.  \frac{\partial\pi_{_{t+1}}}{e_{_{t}}}\right\vert
_{\pi_{_{t+1}}=\pi_{_{t}}=\bar{\pi}}=\frac{\left(  1-\alpha\right)  \beta
\exp\left(  \rho-\beta\bar{e}\right)  }{\left[  1+\exp\left(  \rho-\beta
\bar{e}\right)  \right]  ^{2}}\\
j_{22} &  =\left.  \frac{\partial\pi_{_{t+1}}}{\pi_{t}}\right\vert
_{\pi_{_{t+1}}=\pi_{_{t}}=\bar{\pi}}=\alpha
\end{align*}
The coefficients of the characteristic equation are such that:%
\begin{align*}
tr\text{J}  & =j_{11}+j_{22}\\
& =\frac{\bar{\pi}}{1+\bar{\pi}}+\alpha>0
\end{align*}
and%
\begin{align*}
\det\text{J}  & =j_{11}j_{22}-j_{12}j_{21}\\
& =\frac{\alpha\bar{\pi}}{1+\bar{\pi}}-\left[  \frac{\bar{e}+\sigma w-\left(
\sigma+\gamma\right)  v\tilde{c}}{\left(  1+\bar{\pi}\right)  ^{2}}\right]
\frac{\left(  1-\alpha\right)  \beta\exp\left(  \rho-\beta\bar{e}\right)
}{\left[  1+\exp\left(  \rho-\beta\bar{e}\right)  \right]  ^{2}}\\
& =\frac{\alpha\bar{\pi}}{1+\bar{\pi}}-\left[  \frac{\bar{e}+\bar{e}/\bar{\pi
}}{\left(  1+\bar{\pi}\right)  ^{2}}\right]  \frac{\partial\pi_{_{t+1}}%
}{e_{_{t}}}\\
& =\frac{\alpha\bar{\pi}}{1+\bar{\pi}}-\frac{\bar{\pi}\eta+\eta}{\left(
1+\bar{\pi}\right)  ^{2}}\\
& =\frac{\alpha\bar{\pi}}{1+\bar{\pi}}-\frac{\eta}{1+\bar{\pi}}\\
& =\frac{\alpha\bar{\pi}-\eta}{1+\bar{\pi}}%
\end{align*}
where
\[
\eta\left(  \bar{e},\bar{\pi}\right)  =\frac{\bar{e}}{\bar{\pi}}\frac
{\partial\pi_{_{t+1}}}{e_{_{t}}}%
\]

For two-dimensional maps, the determination of the way in which the stability frontier is crossed is simplified by the use of stability conditions on the coefficients of the characteristic equation (\href{#Medio and Lines 2001}{Medio and Lines, 2001, p. 160}). The three
critical conditions are:%
\begin{align*}
\text{(i) }1-tr\text{J}+\det\text{J}  & >0\\
1-\frac{\bar{\pi}}{1+\bar{\pi}}-\alpha+\frac{\alpha\bar{\pi}-\eta}{1+\bar{\pi
}}  & >0\\
1-\alpha & >\eta
\end{align*}

\begin{align*}
\text{(ii) }1+trJ+\det J  & >0\\
1+\frac{\bar{\pi}}{1+\bar{\pi}}+\alpha+\frac{\alpha\bar{\pi}-\eta}{1+\bar{\pi
}}  & >0\\
1+\frac{\bar{\pi}}{1+\bar{\pi}}+\alpha\left(  1+\frac{\bar{\pi}}{1+\bar{\pi}%
}\right)    & >\frac{\eta}{1+\bar{\pi}}\\
\left(  1+\alpha\right)  \left(  1+\frac{\bar{\pi}}{1+\bar{\pi}}\right)    &
>\frac{\eta}{1+\bar{\pi}}\\
\left(  1+\alpha\right)  \left(  1+\bar{\pi}\right)  \left(  1+\frac{\bar{\pi
}}{1+\bar{\pi}}\right)    & >\eta\\
\left(  1+\alpha\right)  \left(  1+2\bar{\pi}\right)    & >\eta
\end{align*}
\begin{align*}
\text{(iii) }\det\text{J}  & <1\\
\frac{\alpha\bar{\pi}-\eta}{1+\bar{\pi}}  & <1\\
-\frac{1+\eta}{\bar{\pi}}  & <1-\alpha
\end{align*}
A violation of the first condition while the other two are satisfied results in a Fold bifurcation. If only the second condition is violated, a Flip bifurcation occurs. If conditions (i)-(ii) are satisfied while (iii) is violated, a Neimark-Sacker bifurcation occurs.

\subsection{Proof of Proposition 3}

Substituting the optimal trajectories of consumption and conservation
investments into the environment constraint, the evolution of $e$ is described
by:%
\begin{align*}
e_{t+1} &  =e_{t}-\gamma\left\{  \frac{\sigma w+e_{t}}{\left(  1+\pi
_{t}\right)  \left[  \gamma+\sigma\left(  1+\tau\right)  \right]  }+\left(
\frac{\pi_{t}}{1+\pi_{t}}\right)  v\tilde{c}\right\}  \\
&  +\sigma\left\{  \frac{\gamma w-\left(  1+\tau\right)  e_{t}}{\left(
1+\pi_{t}\right)  \left[  \gamma+\sigma\left(  1+\tau\right)  \right]
}+\left(  \frac{\pi_{t}}{1+\pi_{t}}\right)  \left(  w-v\tilde{c}\right)
\right\}  \\
&  =e_{t}-\frac{\gamma\left(  \sigma w+e_{t}\right)  }{\left(  1+\pi
_{t}\right)  \left[  \gamma+\sigma\left(  1+\tau\right)  \right]  }%
-\gamma\left(  \frac{\pi_{t}}{1+\pi_{t}}\right)  v\tilde{c}+\frac
{\sigma\left[  \gamma w-\left(  1+\tau\right)  e_{t}\right]  }{\left(
1+\pi_{t}\right)  \left[  \gamma+\sigma\left(  1+\tau\right)  \right]
}+\sigma\left(  \frac{\pi_{t}}{1+\pi_{t}}\right)  \left(  w-v\tilde{c}\right)
\\
&  =\frac{e_{t}\left(  1+\pi_{t}\right)  \left[  \gamma+\sigma\left(
1+\tau\right)  \right]  -\gamma\left(  \sigma w+e_{t}\right)  }{\left(
1+\pi_{t}\right)  \left[  \gamma+\sigma\left(  1+\tau\right)  \right]  }\\
&  \frac{-\gamma\pi_{t}\left[  \gamma+\sigma\left(  1+\tau\right)  \right]
v\tilde{c}+\sigma\left[  \gamma w-\left(  1+\tau\right)  e_{t}\right]
+\sigma\pi_{t}\left[  \gamma+\sigma\left(  1+\tau\right)  \right]  \left(
w-v\tilde{c}\right)  }{\left(  1+\pi_{t}\right)  \left[  \gamma+\sigma\left(
1+\tau\right)  \right]  }\\
&  =\frac{e_{t}\left(  1+\pi_{t}\right)  \left[  \gamma+\sigma\left(
1+\tau\right)  \right]  -\gamma\sigma w-\gamma e_{t}}{\left(  1+\pi
_{t}\right)  \left[  \gamma+\sigma\left(  1+\tau\right)  \right]  }\\
&  \frac{-\gamma\pi_{t}\left[  \gamma+\sigma\left(  1+\tau\right)  \right]
v\tilde{c}+\sigma\gamma w-\sigma\left(  1+\tau\right)  e_{t}+\sigma\pi
_{t}\left[  \gamma+\sigma\left(  1+\tau\right)  \right]  \left(  w-v\tilde
{c}\right)  }{\left(  1+\pi_{t}\right)  \left[  \gamma+\sigma\left(
1+\tau\right)  \right]  }\\
&  =\frac{e_{t}\left(  1+\pi_{t}\right)  \left[  \gamma+\sigma\left(
1+\tau\right)  \right]  -\gamma e_{t}-\sigma\left(  1+\tau\right)  e_{t}%
}{\left(  1+\pi_{t}\right)  \left[  \gamma+\sigma\left(  1+\tau\right)
\right]  }\\
&  \frac{-\gamma\pi_{t}\left[  \gamma+\sigma\left(  1+\tau\right)  \right]
v\tilde{c}+\sigma\pi_{t}\left[  \gamma+\sigma\left(  1+\tau\right)  \right]
w-\sigma\pi_{t}\left[  \gamma+\sigma\left(  1+\tau\right)  \right]  v\tilde
{c}}{\left(  1+\pi_{t}\right)  \left[  \gamma+\sigma\left(  1+\tau\right)
\right]  }%
\end{align*}%
\begin{align*}
&  =\frac{\left\{  \left(  1+\pi_{t}\right)  \left[  \gamma+\sigma\left(
1+\tau\right)  \right]  -\gamma-\sigma\left(  1+\tau\right)  \right\}  e_{t}%
}{\left(  1+\pi_{t}\right)  \left[  \gamma+\sigma\left(  1+\tau\right)
\right]  }\\
&  +\frac{\sigma\pi_{t}\left[  \gamma+\sigma\left(  1+\tau\right)  \right]
w}{\left(  1+\pi_{t}\right)  \left[  \gamma+\sigma\left(  1+\tau\right)
\right]  }-\frac{\left\{  \gamma\pi_{t}\left[  \gamma+\sigma\left(
1+\tau\right)  \right]  +\sigma\pi_{t}\left[  \gamma+\sigma\left(
1+\tau\right)  \right]  \right\}  v\tilde{c}}{\left(  1+\pi_{t}\right)
\left[  \gamma+\sigma\left(  1+\tau\right)  \right]  }\\
&  =\frac{\left\{  \gamma+\sigma\left(  1+\tau\right)  +\pi_{t}\left[
\gamma+\sigma\left(  1+\tau\right)  \right]  -\gamma-\sigma\left(
1+\tau\right)  \right\}  e_{t}}{\left(  1+\pi_{t}\right)  \left[
\gamma+\sigma\left(  1+\tau\right)  \right]  }\\
&  +\left(  \frac{\pi_{t}}{1+\pi_{t}}\right)  \sigma w-\left(  \frac{\pi_{t}%
}{1+\pi_{t}}\right)  \frac{\left(  \gamma+\sigma\right)  \left[  \gamma
+\sigma\left(  1+\tau\right)  \right]  v\tilde{c}}{\gamma+\sigma\left(
1+\tau\right)  }\\
&  =\left(  \frac{\pi_{t}}{1+\pi_{t}}\right)  \frac{\left[  \gamma
+\sigma\left(  1+\tau\right)  \right]  e_{t}}{\left[  \gamma+\sigma\left(
1+\tau\right)  \right]  }+\left(  \frac{\pi_{t}}{1+\pi_{t}}\right)  \sigma
w-\left(  \frac{\pi_{t}}{1+\pi_{t}}\right)  \left(  \gamma+\sigma\right)
v\tilde{c}\\
&  =\left(  \frac{\pi_{t}}{1+\pi_{t}}\right)  e_{t}+\left(  \frac{\pi_{t}%
}{1+\pi_{t}}\right)  \sigma w-\left(  \frac{\pi_{t}}{1+\pi_{t}}\right)
\left(  \gamma+\sigma\right)  v\tilde{c}\\
&  =\left(  \frac{\pi_{t}}{1+\pi_{t}}\right)  \left[  e_{t}+\sigma w-\left(
\gamma+\sigma\right)  v\tilde{c}\right]
\end{align*}
Therefore, the structure of the first difference equation of our map is the
same. From the asyncronous updating of beliefs, it is also clear
that the final expression does not change as it does not depend on consumption decisions.

\newpage

\end{document}